\renewcommand{\phi}{\varphi}
\renewcommand{\epsilon}{\varepsilon}
\tikzset{
    leq/.style={
    	-{Triangle[open]},
        draw=black,
        fill=none
    },
    strictlysmaller/.style={
        -{Triangle[fill=black]},
        draw=black
    },
    strictlysmallerNew/.style={
        -{Triangle[fill=red]},
        draw=red,
        dashed
    },
    equal/.style={
        -,
        draw=black
    },
	equalNew/.style={
        -,
        draw=red,
        dashed
    },
	funcBound/.style={
        -{to},
        draw=red,
        dashed
    },
	funcBoundOld/.style={
		-{to},
		draw=black
	},
	backPath/.style={
		blue,
		dashed
	}
}
\newcommand{\N}{\mathbb{N}}
\newcommand{\R}{\mathbb{R}}
\newcommand{\shortdash}{\scalebox{0.5}[1.0]{-}}
\newcommand{\xcopwidth}[1]{\mathrm{cw_{#1}}}
\newcommand{\vacopwidth}[1]{\mathrm{va\shortdash \xcopwidth{#1}}}
\newcommand{\cmvacopwidth}[1]{\mathrm{cm\shortdash va\shortdash \xcopwidth{#1}}}
\newcommand{\rmvacopwidth}[1]{\mathrm{rm\shortdash va\shortdash \xcopwidth{#1}}}
\newcommand{\anyvacopwidth}[1]{\mathrm{(rm\backslash cm\shortdash )va\shortdash \xcopwidth{#1}}}
\newcommand{\vlcopwidth}[1]{\mathrm{vl\shortdash \xcopwidth{#1}}}
\newcommand{\cmvlcopwidth}[1]{\mathrm{cm\shortdash vl\shortdash \xcopwidth{#1}}}
\newcommand{\rmvlcopwidth}[1]{\mathrm{rm\shortdash vl\shortdash \xcopwidth{#1}}}
\newcommand{\iacopwidth}[1]{\mathrm{ia\shortdash \xcopwidth{#1}}}
\newcommand{\cmiacopwidth}[1]{\mathrm{cm\shortdash ia\shortdash \xcopwidth{#1}}}
\newcommand{\rmiacopwidth}[1]{\mathrm{rm\shortdash ia\shortdash \xcopwidth{#1}}}
\newcommand{\anyiacopwidth}[1]{\mathrm{(rm\backslash cm\shortdash )ia\shortdash \xcopwidth{#1}}}
\newcommand{\iaNonMon}[3]{\ensuremath{G_{#2,#3}^{#1}}}
\newcommand{\ilcopwidth}[1]{\mathrm{il\shortdash \xcopwidth{#1}}}
\newcommand{\cmilcopwidth}[1]{\mathrm{cm\shortdash il\shortdash \xcopwidth{#1}}}
\newcommand{\rmilcopwidth}[1]{\mathrm{rm\shortdash il\shortdash \xcopwidth{#1}}}
\newcommand{\scol}[1]{\mathrm{scol_{#1}}}
\newcommand{\wcol}[1]{\mathrm{wcol_{#1}}}
\newcommand{\adm}[1]{\mathrm{adm_{#1}}}
\newcommand{\pw}{\mathrm{pw}}
\newcommand{\tw}{\mathrm{tw}}
\newcommand{\farbigergraph}[1]{
	\setsepchar{,}
	\readlist\farblisten{#1}
	\begin{tikzpicture}[every node/.style={circle, draw, inner sep=2pt}]
			\node (a) at (-2, -2) [fill={\farblisten[1]}] {A;$K_8$};
			\node (b) at (-2, 2) [fill={\farblisten[2]}] {B;$K_8$};
			\node (c) at (2, 0) [fill={\farblisten[3]}] {C;$K_2$};
			
			\node (o1) at (-2.5, 1) [draw, circle, fill={\farblisten[4]}] {};
			\node (o2) at (-2, 1) [draw, circle, fill={\farblisten[5]}] {};
			\node (o3) at (-1.5, 1) [draw, circle, fill={\farblisten[6]}] {};
		
			\node (u1) at (-2.5, -1) [draw, circle, fill={\farblisten[7]}] {};
			\node (u2) at (-2, -1) [draw, circle, fill={\farblisten[8]}] {};
			\node (u3) at (-1.5, -1) [draw, circle, fill={\farblisten[9]}] {};
		
			\node (v) at (-2,0) [fill={\farblisten[37]}] {v};
			\node (v1) at (-1,0) [draw, circle, fill={\farblisten[10]}] {};
			\node (v2) at (0,0) [draw, circle, fill={\farblisten[11]}] {};
			\node (v3) at (1,0) [draw, circle, fill={\farblisten[12]}] {};
		
			\node (a11) at (-0.66,-2) [draw, circle, fill={\farblisten[13]}] {};
			\node (a12) at (0.66,-2) [draw, circle, fill={\farblisten[14]}] {};
			\node (a13) at (2,-1) [draw, circle, fill={\farblisten[15]}] {};
			\node (a21) at (-1,-1.5) [draw, circle, fill={\farblisten[16]}] {};
			\node (a22) at (0,-1) [draw, circle, fill={\farblisten[17]}] {};
			\node (a23) at (1,-0.5) [draw, circle, fill={\farblisten[18]}] {};
			\node (a31) at (-0.5,-1.5) [draw, circle, fill={\farblisten[19]}] {};
			\node (a32) at (0.2,-1.33) [draw, circle, fill={\farblisten[20]}] {};
			\node (a33) at (1.33,-0.66) [draw, circle, fill={\farblisten[21]}] {};
			\node (a41) at (-0.33,-1.75) [draw, circle, fill={\farblisten[22]}] {};
			\node (a42) at (0.4,-1.66) [draw, circle, fill={\farblisten[23]}] {};
			\node (a43) at (1.66,-0.75) [draw, circle, fill={\farblisten[24]}] {};
		
			\node (b11) at (-0.66,2) [draw, circle, fill={\farblisten[25]}] {};
			\node (b12) at (0.66,2) [draw, circle, fill={\farblisten[26]}] {};
			\node (b13) at (2,1) [draw, circle, fill={\farblisten[27]}] {};
			\node (b21) at (-1,1.5) [draw, circle, fill={\farblisten[28]}] {};
			\node (b22) at (0,1) [draw, circle, fill={\farblisten[29]}] {};
			\node (b23) at (1,0.5) [draw, circle, fill={\farblisten[30]}] {};
			\node (b31) at (-0.5,1.5) [draw, circle, fill={\farblisten[31]}] {};
			\node (b32) at (0.2,1.33) [draw, circle, fill={\farblisten[32]}] {};
			\node (b33) at (1.33,0.66) [draw, circle, fill={\farblisten[33]}] {};
			\node (b41) at (-0.33,1.75) [draw, circle, fill={\farblisten[34]}] {};
			\node (b42) at (0.4,1.66) [draw, circle, fill={\farblisten[35]}] {};
			\node (b43) at (1.66,0.75) [draw, circle, fill={\farblisten[36]}] {};

			\draw (a) -- (u1);
			\draw (a) -- (u2);
			\draw (a) -- (u3);
		
			\draw (b) -- (o1);
			\draw (b) -- (o2);
			\draw (b) -- (o3);
		
			\draw (v) -- (o1);
			\draw (v) -- (o2);
			\draw (v) -- (o3);
			\draw (v) -- (u1);
			\draw (v) -- (u2);
			\draw (v) -- (u3);
		
			\draw (v) -- (v1);
			\draw (v1) -- (v2);
			\draw (v2) -- (v3);
			\draw (v3) -- (c);
		
			\draw (a) -- (a11);
			\draw (a11) -- (a12);
			\draw (a12) -- (a13);
			\draw (a13) -- (c);
			\draw (a) -- (a21);
			\draw (a21) -- (a22);
			\draw (a22) -- (a23);
			\draw (a23) -- (c);
			\draw (a) -- (a31);
			\draw (a31) -- (a32);
			\draw (a32) -- (a33);
			\draw (a33) -- (c);
			\draw (a) -- (a41);
			\draw (a41) -- (a42);
			\draw (a42) -- (a43);
			\draw (a43) -- (c);
		
			\draw (b) -- (b11);
			\draw (b11) -- (b12);
			\draw (b12) -- (b13);
			\draw (b13) -- (c);
			\draw (b) -- (b21);
			\draw (b21) -- (b22);
			\draw (b22) -- (b23);
			\draw (b23) -- (c);
			\draw (b) -- (b31);
			\draw (b31) -- (b32);
			\draw (b32) -- (b33);
			\draw (b33) -- (c);
			\draw (b) -- (b41);
			\draw (b41) -- (b42);
			\draw (b42) -- (b43);
			\draw (b43) -- (c);
	
	\end{tikzpicture}
	}
\begin{document}

\title{Fast and Furious: A study on Monotonicity and Speed in Cops-and-Robber Games}

\titlerunning{Fast and Furious}

\author{Eva Fluck\inst{1}\orcidID{0000-0002-9643-6081} \and David Philipps\inst{2}\orcidID{0009-0003-0444-8507}}

\authorrunning{E. Fluck, D. Philipps}

\institute{
RWTH Aachen University, Germany\\
\email{fluck@cs.rwth-aachen.de}\\
\and
University of Oxford, UK\\
\email{david.philipps@keble.ox.ac.uk}}

\maketitle             
\begin{abstract}
	In this paper, we study different variants of the Cops-and-Robber game with respect to cop- and robber\-/monotonicity.
	We study a visible and invisible robber and variants where the robber is lazy, thus can only move when the cops announce to move on top of him.
	In all four combinations, we also vary the number $s$ of edges that the robber can traverse in a single round, called speed.
	
	We complete the study of the unbounded speed case by showing that, besides the active variants, also the visible lazy variant has both the cop- and robber\-/monotonicity property.
	Furthermore, we prove that the cop\-/monotone invisible lazy copwidth characterizes path-width, while the non\-/monotone and robber\-/monotone is known to characterize tree-width, thus these variants differ even in the unbounded speed case.
	
	We find that, even with speed restriction, the cop\-/monotone invisible copwidth and the robber\-/monotone invisible active copwidth all characterize path-width.
	On the other hand, we show that the path-width of a graph can be arbitrarily larger than the number of cops needed to win the non\-/monotone invisible active variant.
	To complete our study of cop\-/monotone variants, we show that also in the visible variants the cop\-/monotone copwidth can be arbitrarily larger than the non\-/monotone.
	
	Regarding robber\-/monotonicity, for all speeds $s\geq 4$, we give graphs where the non\-/monotone and robber\-/monotone copwidth differ.
	On the other hand, we prove that there is a function that bounds the robber\-/monotone copwidth in terms of the non\-/monotone copwidth and the speed, thus the gap between the variants is bounded.
	This proof also yields that a graph class has bounded expansion if and only if, for every speed $s$, the number of cops needed in any robber\-/monotone lazy variant is bounded by some constant $c(s)$.

\keywords{structural graph theory  \and Cops-and-Robber \and monotonicity \and path-width \and bounded expansion}
\end{abstract}

\section{Introduction}
After the first study on search models in 1976 by Parsons\cite{parsons2006pursuit}, the node search model we use today under the name of Cops\-/and\-/Robber was essentially introduced in 1985 and 1986 by Kirousis and Papadimitriou \cite{kirousis1985interval,kirousis1986searching}.
The game is played in rounds on the vertices of a graph between several cops and a single robber.
The cops try to catch the robber by placing on the same vertex as him, and the robber tries to avoid getting caught.
In each round, first, some cops exit the graph and announce where they will return (they do not need to move along the edges), then the robber can move along any path that does not contain inner vertices occupied by a cop, and finally, the cops return as announced. In the bounded speed variants, we bound the maximum length of the path that the robber can use.
The copwidth measure is defined as the minimal number of cops needed to catch a smart and omniscient robber on a given graph.
It is easy to see that in a sparse graph, fewer cops may be necessary than in a clique.
Variants of the game differ, e.g. in the speed of the robber, whether he can always move or only when a cop is announced to be placed in his position, called lazy variant instead of active, or whether the robber's position is visible to the cops. We get a different game with different properties and a different copwidth measure depending on the specific ruleset.

The invisible lazy and visible active variants with unbounded speed are known to characterize tree\-/width \cite{dendris1997fugitive,seymour1993graph} and the invisible active variant path\-/width \cite{kirousis1985interval}.
An important step in these proofs often is that these game variants have certain monotonicity properties.
We differ between two monotonicity definitions.
If the cop player is restricted to never revisit a vertex, this is called cop\-/monotonicity, if the robber player is never allowed to move to a vertex that was previously occupied by a cop, this is called robber\-/monotonicity.
Intuitively this means that in a cop\-/monotone game the cops monotonously move forward through the graph, where in a robber\-/monotone game the region that the robber can reach in an unbounded speed game or the region he could still be in in an invisible game monotonously shrinks.
Our definition of robber\-/monotony slightly differs from the definition usually used in literature to translate to visible bounded speed games where there is no longer an obvious definition of the "robber\-/region" and is equivalent to the standard definition in all other cases.
We say a game variant has a monotonicity property, if the number of cops needed to catch the robber in the respective monotone variant does not differ from the non\-/monotone variant.
It is easy to see that robber\-/monotonicity is the stronger restriction on the game.
Besides the above mentioned variants of the game that have the monotonicity properties, there are also variants, that do not have these properties, for example games on directed graphs \cite{Kreutzer_digraph-decomp_2O11} and on hypergraphs \cite{Adler04}.
A monotonicity property can be helpful by restricting the search space when calculating copwidth and yield that the corresponding decision problem belongs to NP.

There are some obvious relations between the different game variants.
A strategy against an invisible robber will also work against a visible one and a strategy against an active robber also works against a lazy robber.
Furthermore a cop\-/monotone strategy is always robber\-/monotone and both are a strategy in the non\-/monotone game.
Lastly the speed restriction hinders only the robber, that is a strategy against a quick robber (even an infinite speed robber) will also work against a slow robber.
On the other hand speed is a real restriction for the robber in most variants.
Consider for example a clique with $n$ vertices where one replaces every edge by a path with $s$ inner vertices.
In the robber\-/monotone invisible lazy or cop\-/monotone visible lazy variant three cops win against a robber of any speed $\leq s$ by systematically clearing one edge after the other.
A robber of speed $>s$ however wins even the visible lazy variant against less that $n$ cops as he can always move to some free vertex of the original clique whenever the cops move.
A similar argument works for the (cop\-/monotone) visible active variant, when the edges are replaced by path with $2s-1$ inner vertices.

Search games, as the Cops\-/and\-/Robber game, have many applications.
For example they model a variety of real\-/world search problems, see \cite{FominT08} for a survey.
The number of rounds needed to capture the robber was used to show lower bounds on the runtime of the famous Weisfeiler\-/Lehman Refinement algorithm for graph isomorphism \cite{Furer_rounds_2001,Grohe2023CR-vs-WL}.
It has been proven that graph classes relating to Cops\-/and\-/Robber games, such as tree\-/width \cite{Neuen_homomorphism-distinguishing_2023}, tree\-/depth \cite{Fluck_deep-wide_2023} or bounded\-/depth tree\-/width \cite{Adler_monotone_2024}, are homomorphism distinguishing closed.
The bounded speed visible active variant is known to characterize bounded expansion \cite{torunczyk2023flip}.

\subsection{Our Contribution}
\begin{figure}[!htb]
    \crefname{theorem}{Thm.}{thms.}
    \crefname{corollary}{Cor.}{cors.}
    
    \begin{tikzpicture}
        \node (il1) at (0,0) {$\ilcopwidth{1}$};
        \node (ril1) at (2.5,0) {$\rmilcopwidth{1}$};
        \node (d*) at (5,0) {$\delta^* + 1$};
        
        \node (ils) at (0,0.8) {$\ilcopwidth{s}$};
        \node (rils) at (2.5,1) {$\rmilcopwidth{s}$};
        \node (scol) at (5,1) {$\scol{s}+1$};
        
        \node (ili) at (0,1.8) {$\ilcopwidth{\infty}$};
        \node (rili) at (2.5,1.8) {$\rmilcopwidth{\infty}$};
        \node (tw) at (5,1.8) {$\tw + 1$};

        \node (ian) at (7.5,1) {$\iacopwidth{1,s}$};
        
        \node (cili) at (0,3) {$\cmilcopwidth{1, s, \infty}$};
        \node (pw) at (2.5,3) {$\pw + 1$};
        \node (iai) at (5,3) {$\anyiacopwidth{\infty}$};
        \node (rian) at (7.5,3) {$\rmiacopwidth{s}$};
        \node (cian) at (10,3) {$\cmiacopwidth{s}$};
      
        \draw[strictlysmallerNew] (ian) -- (iai) node[midway, right] {\cref{TheoremInvActRobbermonotonicityBoundSpeed}}; 
        \draw[strictlysmaller] (d*) -- (ian);

        \draw[equalNew] (cian) -- (rian) node[midway, above] {\cref{thmt@@CorollaryInvActCopmonotonicity}};
        \draw[equalNew] (iai) -- (rian) node[midway, above] {\cref{thmt@@TheoremInvActRobbermonotonicity}};

        \draw[equal] (iai) -- (pw) node[midway, above] {\cite{kirousis1985interval,kirousis1986searching}};

        \draw[equal] (il1) -- (ril1) node[midway, above] {\cite{dendris1997fugitive}};
        \draw[equal] (d*) -- (ril1) node[midway, above] {\cite{dendris1997fugitive}};        
        \draw[strictlysmaller] (il1) -- (ils);      
        \draw[strictlysmallerNew] (ils) -- (rils) node[midway, above] {\cref{thmt@@TheoremLazyRecontamination}};
        \draw[equal] (rils) -- (scol) node[midway, above] {\cite{dendris1997fugitive}};        
        \draw[funcBound] (rils) -- (ils);        
        \draw[strictlysmallerNew] (ils) -- (ili);        
        \draw[strictlysmaller] (scol) -- (tw) node[midway, right] {\cite{nevsetvril2012sparsity}};        
        \draw[strictlysmallerNew] (ril1) -- (rils);         
        \draw[strictlysmallerNew] (rils) -- (rili);        
        \draw[equal] (ili) -- (rili) node[midway, above] {\cite{dendris1997fugitive}};        
        \draw[equal] (tw) -- (rili) node[midway, above] {\cite{dendris1997fugitive}};        
        \draw[strictlysmallerNew] (ili) -- (cili) node[midway, left] {\cref{thmt@@TheoremInvLazyCopmonotonicity}};        
        \draw[equalNew] (cili) -- (pw) node[midway, above] {\cref{thmt@@TheoremInvLazyCopmonotonicity}};
    \end{tikzpicture}

    \begin{tikzpicture}
        \node (va1) at (4.5,0) {$\vacopwidth{1}$};
        \node (rva1) at (6,0) {$\rmvacopwidth{1}$};
        \node (cva1) at (6.7,1) {$\cmvacopwidth{1}$};
        \node (vas) at (4.5,1.6) {$\vacopwidth{s}$};
        \node (adm) at (3,0.6) {$\adm{s}+1$};
        \node (scol) at (3.1,2.6) {$\scol{4s}+1$};
        \node (wcol) at (2.3,1.7) {$\wcol{2s}+1$};
        \node (cvas) at (6.7,2.4) {$\cmvacopwidth{s}$};
        \node (vai) at (4.5,3) {$\anyvacopwidth{\infty}$};
        \node (tw) at (2.2,3) {$\tw + 1$};
        
        \node (vl1) at (0,0) {$\vlcopwidth{1}$};
        \node (d*) at (1.5,0) {$\delta^* + 1$};
        \node (vls) at (0,1) {$\vlcopwidth{s}$};
        \node (ds) at (1.5,1) {$\delta^s + 1$};
        \node (rvl1) at (-2.2,0) {$\rmvlcopwidth{1}$};
        \node (cvl1) at (-4.5,0.5) {$\cmvlcopwidth{1}$};
        \node (rvls) at (-2.2,1.2) {$\rmvlcopwidth{s}$};
        \node (cvls) at (-4.5,1.4) {$\cmvlcopwidth{s}$};
        \node (vli) at (0,2.2) {$\vlcopwidth{\infty}$};
        \node (rvli) at (-2.2,2.2) {$\rmvlcopwidth{\infty}$};
        \node (cvli) at (-4.5,2.2) {$\cmvlcopwidth{\infty}$};
        \node (di) at (1.5,2.2) {$\delta^\infty + 1$};
        
        \draw [equal] (va1) -- (rva1) node[midway, above] {\cite{torunczyk2023flip}};
        \draw [equal] (va1) -- (d*) node[midway, above] {\cite{torunczyk2023flip}};
        \draw [equal] (vai) -- (tw) node[midway, above] {\cite{seymour1993graph}};

        \draw [strictlysmaller] (va1) -- (vas);
        \draw [strictlysmaller] (adm) -- (vas) node[midway, below] {\cite{torunczyk2023flip}};
        \draw [strictlysmaller] (ds) -- (vas);
        \draw [strictlysmaller] (adm) -- (ds) node[midway, below] {\cite{doi:10.1137/090780006}};
        \draw [funcBoundOld] (ds) -- (adm);
        \draw [leq] (vas) -- (scol) node[midway, above] {\cite{hickingbotham2023cop}};
        \draw [leq] (vas) -- (wcol) node[midway, above] {\cite{torunczyk2023flip}};
        \draw [strictlysmallerNew] (vas) -- (cvas) node[midway, above left=0.0cm and -0.30cm] {\cref{thmt@@TheoremVisActCopmonotonicity}};
        \draw [strictlysmallerNew] (rva1) -- (cva1) node[midway, right] {\cref{thmt@@TheoremVisActCopmonotonicity}};
        \draw [strictlysmaller] (vas) -- (vai);
        \draw [strictlysmallerNew] (cva1) -- (cvas);
        \draw [strictlysmallerNew] (cvas) -- (vai);
        \draw [strictlysmallerNew] (cvl1) -- (cvls);

        \draw [strictlysmallerNew] (cvls) -- (cvli);
        \draw [strictlysmallerNew] (rvl1) -- (rvls);
        \draw [strictlysmallerNew] (rvls) -- (rvli);

        \draw[equal] (rvl1) -- (vl1) -- (d*) node[midway, above] {\cite{doi:10.1137/090780006}};
        \draw[equal] (vls) -- (ds) node[midway, above] {\cite{doi:10.1137/090780006}};
        \draw[equalNew] (vli) -- (rvli) node[midway, above] {\cref{thmt@@TheoremVisLazyMonotonicity}};
        \draw[equalNew] (rvli) -- (cvli) node[midway, above] {\cref{thmt@@TheoremVisLazyMonotonicity}};
        \draw[equal] (vli) -- (di) node[midway, above] {\cite{doi:10.1137/090780006}};

        \draw[strictlysmaller] (di) -- (tw) node[midway, left] {\cite{doi:10.1137/090780006}};
        \draw[strictlysmaller] (vls) -- (vli) node[midway, right] {\cite{doi:10.1137/090780006}};
        \draw[strictlysmaller] (vl1) -- (vls) node[midway, right] {\cite{doi:10.1137/090780006}};
        \draw[strictlysmallerNew] (rvl1) -- (cvl1) node[midway, above] {\cref{thmt@@TheoremVisLazyCopmonotonicity}};
        \draw[strictlysmallerNew] (rvls) -- (cvls) node[midway, above] {\cref{thmt@@TheoremVisLazyCopmonotonicity}};
        \draw[strictlysmallerNew] (vls) -- (rvls) node[midway, above] {\cref{thmt@@TheoremLazyRecontamination}};
        \draw[funcBound] (rvls) -- (vls);
    \end{tikzpicture}
    \caption{Known (black) and new (red, dashed) results for the variants. Arrows with open triangle heads represent a not strict inequality, arrows with filled triangle heads represent a strict inequality on some graphs. Lines represent equalities and arrows with a tip on the reverse side represent that the corresponding inequality is functionally bounded. $s$ is an arbitrary natural number greater than 1.}
    \label{figResults}
\end{figure}
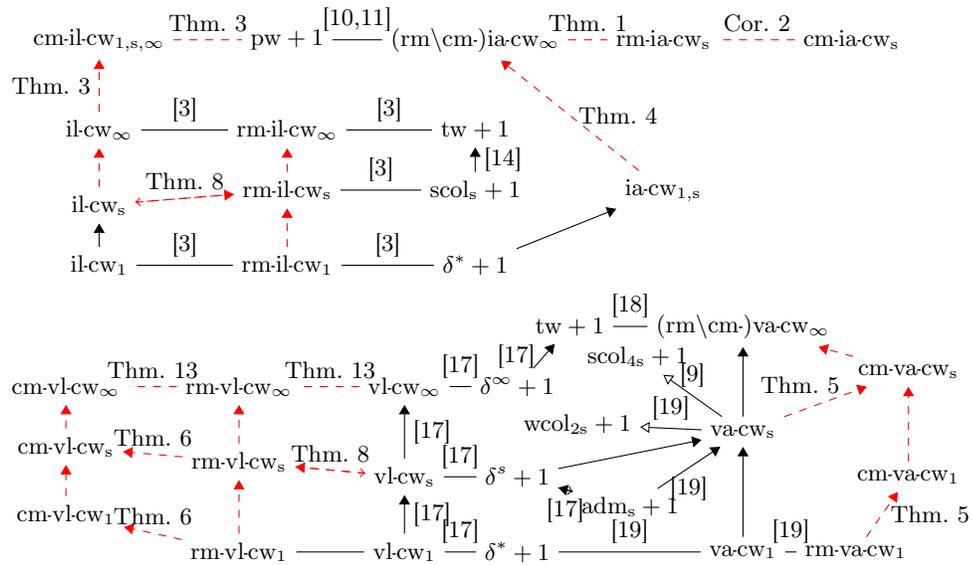
\cref{figResults} summarizes all previously known and new results.
Our work solves an open problem from Dendris, Kirousis and Thilikos \cite{dendris1997fugitive} regarding the robber\-/monotonicity property of the invisible lazy variant for $s>4$ and answers an open questions from Richerby and Thilikos \cite{doi:10.1137/090780006} regarding monotonicity of the unbounded speed visible lazy variant.

In this paper, we prove that for any speed the cop- and robber\-/monotone variants of invisible active copwidth and the cop\-/monotone variant of invisible lazy copwidth all equal path\-/width. This directly implies that the latter does not have the cop\-/monotonicity property for any speed. For the invisible active variant we provide example graphs to separate the standard bounded speed variant from path\-/width with an arbitrarily large gap and therefore show non\-/cop\-/monotonicity. The maximal size of such a gap between the monotone and the standard variant is called the monotonicity cost.

We establish the non\-/cop\-/monotonicity of the bounded speed visible variants and the unbounded monotonicity cost in these cases by constructing example graphs. Further, we give examples for the non\-/robber\-/monotonicity of the bounded speed lazy variants and provide a min\-/max characterization of the strong colouring numbers in terms of a Cops\-/and\-/Robber game. Finally we show that the unbounded speed visible lazy variant has the robber\-/monotonicity property.
\section{Preliminaries}
In this paper, we only consider finite and undirected graphs $G=(V, E)$ without self\-/loops that contain at least one edge. For $A\subseteq V$, the induced subgraph on the vertex set $V\setminus A$ is denoted by $G\setminus A$. A \emph{layout} of a graph is a linear order on the vertex set. For any $s\in \N \cup \{\infty\}$, given a layout $\prec$, a vertex $w\prec v$ is \emph{weakly $s$\=/reachable} from $v$ if there exists a path of length $\leq s$ from $v$ to $w$ such that all internal nodes $w'$ of that path satisfy $w'\succ w$. A vertex $w\prec v$ is \emph{strongly $s$\=/reachable} from $v$ if there exists a path of length $\leq s$ from $v$ to $w$ such that all internal nodes $w'$ of that path satisfy $w'\succ v$.
For a given graph, we define the following parameters as the minimal $k$ such that there is a layout where for each vertex $v$, there...
\begin{itemize}
  \item \textbf{degeneracy $\delta^*$ or width:} ...are at most $k$ vertices $w\prec v$, that are adjacent to $v$ 
  \item \textbf{weak $s$\=/coloring number $\wcol{s}$:} ...are at most $k$ vertices $w\prec v$, that are weakly $s$\=/reachable from $v$ 
  \item \textbf{strong $s$\=/coloring number $\scol{s}$ or $s$\=/elimination dimension:} ...are at most $k$ vertices $w\prec v$, that are strongly $s$\=/reachable from $v$
  \item \textbf{$s$\=/admissibility $\adm{s}$:} ... are at most $k$ paths of length $\leq s$, that start at $v$, end in some $w\prec v$ and are pairwise vertex disjoint apart from $v$ 
  \item \textbf{$s$\=/degeneracy $\delta_s$:} ...is an $A\subseteq V\setminus\{v\}$, $|A|\leq k$ such that no $w\prec v$ is reachable from $v$ in $G\setminus A$ by a path of length $\leq s$ .\footnote{In the literature these parameters are sometimes defined with a $w\preceq v$ instead of $w\prec v$ which leads to an offset by 1 in the parameter compared to the definition given above.}
\end{itemize}
Another way to characterize graph parameters is by obstructions: For $s$\=/degeneracy, for example, there is the notion of a \emph{$(k,s)$\=/hideout} as an obstruction, that is a set $U\subseteq V(G)$, such that for every vertex $v\in U$ and any set $A\subseteq V(G)\setminus \{v\}$ with $|A|<k$, there is some path from $v$ to $U\setminus \{v\}$ of length $\leq s$ in $G\setminus U$. $G$ has $\delta_s(G) \geq k$ if and only if there is a $(k,s)$\=/hideout in $G$. For more details see \cite{doi:10.1137/090780006}.
From the definitions it follows, that $\delta^*=\delta_1=\wcol{1}=\scol{1}=\adm{1}$ and for every $s\in \N$ $\adm{s} \leq \scol{s} \leq \wcol{s}$. A graph parameter $f$ is \emph{bounded in terms of} another graph parameter $g$ if there exists a function $\alpha: \R \rightarrow \R$ with $f(G)\leq \alpha (g(G))$ for every graph $G$.
Two parameters are \emph{functionally equivalent} if each is bounded in terms of the other. For example, for every $s\in\N$ $s$\=/admissibility and the generalized $s$\=/coloring numbers are pairwise functionally equivalent with a polynomial bound as shown by Dvořák in \cite{DVORAK2013833} and Nešetřil and De Mendez \cite{nevsetvril2012sparsity}.\\

\subsection{The Game}
A Cops\-/and\-/Robber game is played between a team of cops and a robber on the vertices of a given graph.
The cops try to catch the robber by moving from vertex to vertex, while the robber evades by moving along unguarded paths of the graph\footnote{Our definition is very similar to the ones used in \cite{dendris1997fugitive} and the node\-/search of \cite{doi:10.1137/090780006} with the distinction that in these papers $S_i\Delta S_{i+1}$ should have cardinality of at most 1. Intuitively, this means only one cop can be removed or placed per round. However, this is irrelevant for the visible lazy and invisible lazy variants considered in these papers, as the strategies can be easily transformed into each other: When given $S_i\Delta S_{i+1}$ with a cardinality of $k$ one could one by one firstly remove the cops, then place them on the positions where the robber is definitely not and lastly place the rest.}.
A formal Cops\-/and\-/Robber play on a Graph $G=(V, E)$ is a series of tuples $(S_0, R_0), (S_1, R_1), ...$, where $S_0=\emptyset$, $S_t$ denotes the positions of the cops and $R_t$ the position of the robber at the end of round $t$. We call max$\{|S_0|, |S_1|,...\}$ the \emph{width} of the play. The robber is caught whenever $R_t \in S_t$. He plays with \emph{speed} $s\in \N \cup \{\infty\}$ if $R_t$ and $R_{t+1}$ in $G\setminus {S_t \cap S_{t+1}}$ are connected by a path of length at most $s$. We define \emph{$\vacopwidth{s}(G)$} as the least $k$ such that there exists a \emph{strategy function} $f:\binom{V}{k} \times V \rightarrow \binom{V}{k}$\footnote{Note, that in our definition the strategy function does not depend on the history of the play.} that satisfies the following: In every play on $G$, where $S_{t+1}=f(S_t, R_t)$ and the robber is playing with speed $s$, the robber is caught and the width of the play is at most $k$ (Note that most of the time we will only give partial strategy functions that assumes a smart and omniscient robber that can be trivially be expanded to satisfy all conditions). The "va" stands for visible active, and we will name the other copwidth variants in a similar pattern. We define the \emph{invisibile active} variant of copwidth $\iacopwidth{s}$ analogously, where the strategy function can not depend on the robber position $R_t$ but just on the current cop position $S_t$ and $t$ and the \emph{lazy} variants $\vlcopwidth{s}$ and $\ilcopwidth{s}$ where only robbers need to be caught that satisfy $R_t \neq R_{t+1} \implies R_t \in S_{t+1}$. When talking about an invisible variant of copwidth, we will refer to the "possible hiding spots of the robber" at a round $t_0$ in regards to a specific cop\-/strategy $f$, meaning all the vertices $v$ for which there is a proper series of robber positions $(R_t)_{t\leq t_0}$ ending on $v$, that is not yet caught by cops acting accordingly to $f$. We say that the cops clear a vertex if they move on it, and it was previously a possible hiding spot.\\
One can see that if a graph has minimum degree $k$ and maximum degree $K$, then at least $k+1$ cops are necessary to catch a visible active or visible lazy robber and at most $K+1$ in the visible lazy case. \\ 
We are interested in the interplay between a speed bound on the robber and a monotonicity restriction on the cops. Cops are said to search a graph \emph{cop\-/monotonously} if they never revisit a vertex, so $S_i \cap S_k \subseteq S_j$ for $i<j<k \in \N$. This gives us another range of copwidth variants like $\cmvacopwidth{s}$, defined by only considering strategies leading to cop\-/monotone searches. A variant is considered to have the \emph{cop\-/monotonicity property} if the variant's copwidth is equal, regardless of whether the cops search only cop\-/monotonously. \footnote{We consistently use the wordy phrasing "the variant has the cop\-/monotonicity property" instead of saying that "the variant is cop\-/monotone" to prevent any confusion with the term "cop\-/monotone variant" when regarding the altered variant.} \\
Analogously, cops are said to search a graph \emph{robber\-/monotonously} if $R_t \notin \bigcup_{t'< t}S_{t'}$ until the robber is caught. This means that once the cops expel the robber from a particular region, he should not be able to return there later on, and gives us another range of copwidth variants like $\rmiacopwidth{s}$ etc. that are defined by only considering strategies that lead to robber\-/monotone searches. A variant is considered to have the \emph{robber\-/monotonicity property} if this restriction does not affect the corresponding copwidth. It is clear from the definition that robber\-/monotonicity is weaker than cop\-/monotonicity because any cop\-/monotone search strategy that allows a robber to enter a vertex that a cop previously occupied would not be able to catch that robber later on. Note that all of these copwidth measures are monotone with respect to the subgraph relation.\\

\section{Path-Width and Invisible Variants} 
\label{MoreNonmonotonicity}
In this section, we will consider scenarios where requiring the cops to play monotonically is a hard restriction in the sense that it makes them unable to exploit any speed bound of the robber.
The first example of such a phenomenon is the invisible active variant.
Examining the copwidth for any bounded speed and adding robber\-/monotonicity results in the same copwidth as for unbounded speed, which we know equals path\-/width.
The basic idea is that if the cops play robber\-/monotone, the robber's movement is restricted to a monotonically shrinking subset of vertices and it is not important whether he moves within this region, and therefore also not with what speed.   

\begin{restatable}{theorem}{TheoremInvActRobbermonotonicity}
    For every graph $G$ and speed $s\in\N$, it holds that
\begin{align*}
    \rmiacopwidth{s}(G) &= \rmiacopwidth{\infty}(G) \\
    &= \pw(G) + 1.
\end{align*}
\end{restatable}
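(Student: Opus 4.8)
The plan is to sandwich $\rmiacopwidth{s}(G)$ between $\pw(G)+1$ from below and $\rmiacopwidth{\infty}(G)$ from above, proving a matching lower bound $\rmiacopwidth{s}(G)\geq \pw(G)+1$ that holds \emph{uniformly} for every speed $s\in\N$ and for $s=\infty$. The upper bound $\rmiacopwidth{\infty}(G)\leq \pw(G)+1$ is standard (see \cite{kirousis1985interval}): sweeping through the bags of an optimal path decomposition with $\pw(G)+1$ cops catches even the infinite-speed invisible robber, and since each vertex occupies a contiguous interval of bags this search is cop-monotone and hence robber-monotone. A strategy against the infinite-speed robber is in particular a strategy against a speed-$s$ robber, and the contamination only shrinks under a speed bound so monotonicity is inherited; thus $\rmiacopwidth{s}(G)\leq \rmiacopwidth{\infty}(G)\leq \pw(G)+1$. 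It remains to show that any robber-monotone invisible active winning search of width $k$ certifies $\pw(G)\leq k-1$.

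The heart of the argument is a structural description of the contamination. Write $D_t=\bigcup_{t'\leq t}S_{t'}$ for the set of vertices occupied up to round $t$, and let $C_t$ denote the possible hiding spots. I first observe that contamination can only be lost through occupation: if a vertex lies in $C_t$ and is not occupied in round $t+1$, the robber can simply stay put, so it remains in $C_{t+1}$. Hence a never-occupied vertex stays contaminated forever. Combining this with robber-monotonicity, which forbids $C_t$ from meeting $D_{t-1}$ (and $C_t$ never meets the current $S_t$), I expect the clean identity $C_t=V\setminus D_t$: the possible hiding spots are exactly the not-yet-occupied vertices, and the contaminated region shrinks monotonically, as the intuition predicts. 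This is where monotonicity does all the work and where the robber's speed drops out of the picture.

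From this I would extract a path decomposition. For each $v$ let $c(v)$ be the round in which $v$ is first occupied (well defined since a winning search has $C_m=\emptyset$, i.e. $D_m=V$), and set $d(v)=\max_{w\in N[v]}c(w)$ and $B_t=\{v:c(v)\leq t\leq d(v)\}$. The interval property holds by construction, every vertex lies in $B_{c(v)}$, and for an edge $uv$ with $c(u)\leq c(v)$ both endpoints lie in $B_{c(v)}$ because $d(u)\geq c(v)$. The width bound reduces to the inclusion $B_t\subseteq S_t$: if $v\in B_t$ then $v$ is already occupied ($v\in D_t$) yet has a closed neighbour $w$ still contaminated in round $t-1$ (so $w\in C_{t-1}=V\setminus D_{t-1}$); were $v\notin S_t$, the robber at $w$ could cross the single edge $wv$ while it is unguarded and land on the unoccupied $v$, forcing $v\in C_t=V\setminus D_t$ and contradicting $v\in D_t$. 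Hence $|B_t|\leq|S_t|\leq k$, giving $\pw(G)\leq k-1$. Crucially, this \emph{hop} argument only uses that the robber may traverse one edge, so it applies for every $s\geq 1$ and for $s=\infty$ alike.

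Combining the bounds gives $\pw(G)+1\leq\rmiacopwidth{s}(G)\leq\rmiacopwidth{\infty}(G)\leq\pw(G)+1$ for every finite $s$, while the same lower bound together with the sweeping upper bound pins $\rmiacopwidth{\infty}(G)$ at $\pw(G)+1$; both equalities follow. I expect the only delicate point to be the identity $C_t=V\setminus D_t$: one must reason carefully with the precise timing of a round (cops leaving, robber moving along a path, cops returning) and with the paper's exact, slightly non-standard formulation of robber-monotonicity, in order to be certain that no vertex is ever cleared except by being occupied and that the single-edge hop is always available. Once that identity is secured, the robber's speed never re-enters the argument, which is exactly why the bounded- and unbounded-speed parameters coincide.
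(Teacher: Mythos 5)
Your proposal is correct, and it reaches the conclusion by a different architecture than the paper, although the combinatorial kernel is the same. The paper does not construct a path decomposition at all: it shows directly that a robber\-/monotone winning strategy for the speed\-/$s$ invisible active game is already a robber\-/monotone winning strategy against the \emph{infinite}\-/speed robber (if an infinite\-/speed robber could cross into the cleared zone $Z_{t'}$, the first crossing would use an edge $vw$ with $v\notin Z_{t'-1}$, $w\in Z_{t'-1}$, and a finite\-/speed robber could wait at $v$ and make that one\-/edge hop, violating monotonicity), and then invokes the known identity $\rmiacopwidth{\infty}=\pw+1$. You instead prove the lower bound $\rmiacopwidth{s}(G)\geq\pw(G)+1$ by extracting a path decomposition from the first\-/occupation times $c(v)$, with the width bound $B_t\subseteq S_t$ secured by exactly the same one\-/edge\-/hop observation: a previously occupied vertex with a still\-/uncleared closed neighbour must currently carry a cop, or robber\-/monotonicity fails. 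I checked the delicate points you flagged: the identity $C_t=V\setminus D_t$ holds (the stationary trajectory shows $V\setminus D_t\subseteq C_t$, and robber\-/monotonicity plus non\-/capture gives the converse), the single edge $wv$ is indeed an admissible robber move since neither endpoint lies in $S_{t-1}\cap S_t$, and the case $c(v)=t$ is trivial since then $v\in S_t$ outright. Your route is more self\-/contained on the lower\-/bound side, essentially re\-/deriving the relevant half of the Kirousis--Papadimitriou correspondence for every finite speed at once; the paper's route is shorter because it reduces everything to the already\-/established unbounded\-/speed case. Both arguments rely on the standard sweeping strategy for the upper bound $\rmiacopwidth{\infty}(G)\leq\pw(G)+1$.
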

Recall that cop\-/monotonicity is stronger than robber\-/monotonicity and that the invisible active variant has the cop\-/monotonicity property, so the above theorem also holds for cop\-/monotonicity.
\begin{restatable}{corollary}{CorollaryInvActCopmonotonicity}
    For every graph $G$ and speed $s\in \N$, it holds that
\begin{align*}
    \cmiacopwidth{s}(G) &= \cmiacopwidth{\infty}(G)\\
    &= \pw(G) + 1.
\end{align*}
\end{restatable}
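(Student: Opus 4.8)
The plan is to obtain the corollary as a direct squeeze, combining the preceding theorem with the known cop\-/monotonicity property of the unbounded\-/speed invisible active variant, rather than re\-/running any strategy construction. Two ingredients suffice: the trivial inequalities relating the cop\-/monotone, robber\-/monotone and speed\-/restricted variants, and the result of Kirousis and Papadimitriou \cite{kirousis1985interval,kirousis1986searching}, which gives $\cmiacopwidth{\infty}(G)=\pw(G)+1$ (i.e.\ the unbounded\-/speed invisible active variant having the cop\-/monotonicity property). I would prove the two equalities by sandwiching $\cmiacopwidth{s}(G)$ between $\pw(G)+1$ from below and $\pw(G)+1$ from above.

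For the lower bound I would argue that $\cmiacopwidth{s}(G)\geq\rmiacopwidth{s}(G)$ for every $s\in\N$. This is exactly the observation recorded in the preliminaries: any winning cop\-/monotone strategy is automatically robber\-/monotone, since a cop\-/monotone team never revisits a cleared vertex and hence could never guarantee a capture on a vertex that later becomes a hiding spot again. Thus a winning cop\-/monotone strategy with $k=\cmiacopwidth{s}(G)$ cops is itself a witness for $\rmiacopwidth{s}(G)\leq k$, giving the inequality. Invoking \cref{thmt@@TheoremInvActRobbermonotonicity} then yields $\cmiacopwidth{s}(G)\geq\rmiacopwidth{s}(G)=\pw(G)+1$.

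For the upper bound I would start from an optimal cop\-/monotone strategy against an infinite\-/speed invisible active robber, which uses $\pw(G)+1$ cops by the cited result. Since the invisible active strategy function depends only on the current cop configuration and the round number, and not on the robber, it remains cop\-/monotone when played against a slower robber. Because every speed\-/$s$ robber play is in particular a valid infinite\-/speed play, the same strategy catches every speed\-/$s$ robber, so it witnesses $\cmiacopwidth{s}(G)\leq\cmiacopwidth{\infty}(G)=\pw(G)+1$. Combining the two bounds gives $\cmiacopwidth{s}(G)=\pw(G)+1$, and reading off the cited equality for $s=\infty$ gives $\cmiacopwidth{\infty}(G)=\pw(G)+1$, completing the stated chain.

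The argument contains no real obstacle: everything reduces to getting the directions of the inequalities right and to checking that cop\-/monotonicity is a property of the cops' trajectory alone, hence preserved verbatim when the robber is slowed. The only point that deserves a sentence of care is the implication ``winning cop\-/monotone $\Rightarrow$ robber\-/monotone'' in the invisible setting, which must be phrased in terms of the possible hiding spots: if a cleared vertex could host a hiding spot at a later round, a never\-/returning cop team cannot guarantee a capture there, so a winning cop\-/monotone play cannot in fact let this happen.
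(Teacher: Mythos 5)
Your proposal is correct and matches the paper's own (one-sentence) justification: the lower bound via ``cop\-/monotone $\Rightarrow$ robber\-/monotone'' combined with the preceding theorem, and the upper bound via the known cop\-/monotonicity property of the unbounded\-/speed invisible active variant together with the fact that a speed\-/$s$ play is a special case of an infinite\-/speed play. Your writeup is simply a more explicit version of the same squeeze.
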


With a similar technique we can prove the same bound for the cop\-/monotone invisible active variant.
\begin{restatable}{theorem}{TheoremInvLazyCopmonotonicity}
 For every graph $G$ and speed $s\in \N \text{ or } s=\infty$
\begin{align*}
    \cmilcopwidth{s}(G) &= \cmiacopwidth{\infty}(G)\\
    &= \pw(G) + 1.
\end{align*}
\end{restatable}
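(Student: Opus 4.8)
The plan is to prove both equalities by sandwiching $\cmilcopwidth{s}(G)$ between $\pw(G)+1$ from above and below, using the identity $\cmiacopwidth{\infty}(G)=\pw(G)+1$ from the preceding corollary (\cref{thmt@@CorollaryInvActCopmonotonicity}) as the bridge. The upper bound will be essentially free, while all the real work goes into the lower bound, where cop\-/monotonicity, invisibility and laziness interact.

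For the upper bound I would only observe that laziness weakens the robber and a speed bound helps the cops: any cop\-/monotone strategy that catches an invisible \emph{active} robber of unbounded speed also catches the invisible \emph{lazy} robber of speed $s$, and it stays cop\-/monotone. Hence $\cmilcopwidth{s}(G)\le\cmiacopwidth{s}(G)\le\cmiacopwidth{\infty}(G)=\pw(G)+1$, where the middle step is the corollary (for $s=\infty$ it is trivial). Concretely, sweeping the bags of an optimal path decomposition from left to right realises such a strategy, so no new construction is needed here.

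The substance is the lower bound $\cmilcopwidth{s}(G)\ge\pw(G)+1$. I would take any winning cop\-/monotone invisible lazy strategy of width $k$, run it against a smart lazy robber up to the round $m$ at which the possible hiding spots are exhausted, and claim that the sequence of cop positions $S_0,S_1,\dots,S_m$ is a path decomposition of $G$ of width $k-1$; this yields $k\ge\pw(G)+1$. Two of the three axioms are quick. The interval property is exactly cop\-/monotonicity: $S_i\cap S_k\subseteq S_j$ for $i<j<k$ says each vertex is occupied during a contiguous range of rounds, so its occupation times form an interval $I_v$. Vertex coverage holds because a lazy robber placed on a never\-/occupied vertex is never forced to move and hence never caught, so a winning strategy must occupy every vertex.

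The hard part, and the main obstacle, is edge coverage, and this is precisely where laziness must be exploited. Suppose some edge $\{u,v\}$ lies in no single bag; by the interval property $I_u$ and $I_v$ are disjoint, say $I_u$ ends at round $b_u$ strictly before $I_v$ begins at round $a_v$. I would then exhibit an escaping lazy play: the robber sits at $v$ from the start, which is legal because $v$ is unoccupied through round $a_v-1$ and a lazy robber need not move while no cop lands on him. At round $a_v$ the cops occupy $v$ for the first time, forcing him to move; but since $a_v>b_u$, cop\-/monotonicity forbids any cop from returning to $u$, so both $u$ and the edge $\{u,v\}$ avoid $S_{a_v-1}\cap S_{a_v}$ and the length\-/one path $v\to u$ is open. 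The robber escapes to $u$, which is never reoccupied, and remains there forever uncaught, contradicting that the strategy wins. The delicate point is that laziness lets the robber \emph{wait} at $v$ until the exact moment the cop arrives, which by cop\-/monotonicity is exactly the moment the escape to the already\-/cleared $u$ is guaranteed open; an active robber, forced to keep moving, could not synchronise this. Since the argument uses only a single edge, it is uniform in $s$ and covers $s=1$, finite $s$, and $s=\infty$ simultaneously, completing the sandwich.
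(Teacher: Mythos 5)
Your proposal is correct, and for the substantive direction it takes a genuinely different route from the paper. The paper proves the lower bound by a reduction: it shows that any cop\-/monotone winning strategy against the invisible \emph{lazy} speed\-/$s$ robber also wins against the invisible \emph{active} speed\-/$s$ robber (if an active robber could cross from an uncleared vertex $v$ over an edge into a cleared\-/and\-/vacated vertex $w$, a lazy robber could instead wait at $v$ and slip to $w$ when forced to move), and then invokes \cref{thmt@@CorollaryInvActCopmonotonicity} to conclude $\cmiacopwidth{s}(G)=\pw(G)+1\le\cmilcopwidth{s}(G)$. You instead read off a path decomposition directly from the deterministic sequence of cop positions $S_0,\dots,S_m$, verifying the interval property from cop\-/monotonicity, vertex coverage from the waiting lazy robber, and edge coverage by the same wait\-/then\-/slip escape. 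The combinatorial kernel is identical --- in both arguments the decisive move is a lazy robber loitering at the uncleared endpoint of an edge until the cops arrive and then stepping onto the permanently vacated endpoint --- but your framing makes the lower bound self\-/contained (it does not route through the active variant or through \cref{thmt@@TheoremInvActRobbermonotonicity}), at the price of re\-/deriving the ``cop\-/monotone invisible strategies are path decompositions'' correspondence that the paper gets for free from the cited results on $\anyiacopwidth{\infty}$. Both treatments handle $s\in\N$ and $s=\infty$ uniformly, and your upper bound coincides with the paper's.
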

The non\-/cop\-/monotonicity of the invisible lazy copwidth now follows from the upper bound $\ilcopwidth{s}(G) \leq \ilcopwidth{\infty}(G) =\tw(G)+1$ because tree\-/width may be arbitrarily smaller than path\-/width. So $\cmilcopwidth{s}$ may be arbitrarily larger than $\ilcopwidth{s}$.

For the invisible active variant we are able to show a similar result.
\begin{restatable}{theorem}{TheoremInvActRobbermonotonicityBoundSpeed}
	\label{TheoremInvActRobbermonotonicityBoundSpeed}
	For every $g,s\in\N$ and every $n\geq 2g+2$, there is a graph $\iaNonMon sgn$ such that $\pw(\iaNonMon sgn)\geq n+g$ but $\iacopwidth{s}(\iaNonMon sgn)\leq n$.
\end{restatable}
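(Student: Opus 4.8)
The plan is to exhibit a graph whose pathwidth is large but on which a slow, invisible, active robber can nonetheless be caught with few cops, and to locate the whole difficulty in the tension between these two requirements. For the lower bound I will use the known identity $\iacopwidth{\infty}(G)=\pw(G)+1$ and instead prove that an \emph{unbounded}\-/speed invisible robber evades $n+g$ cops on $\iaNonMon{s}{g}{n}$; for the upper bound I will give an explicit $n$\-/cop strategy against the speed\-/$s$ robber. The guiding principle is that a path of length $>s$ can be crossed by the unbounded robber in a single move but is an impassable barrier for the speed\-/$s$ robber in one round. Accordingly I would build $\iaNonMon{s}{g}{n}$ from a small number of size\-/$n$ cliques joined on one side through a cheap ``hub'' (a single vertex together with a few length\-/$2$ connectors, as in the vertex $v$ of the illustrating figure) and on the other side through several internally disjoint paths of length $>s$, with multiplicity growing in $g$, meeting in a common small clique. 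The short hub keeps the cliques combinatorially close, which is what forces the pathwidth up, while the long bundles are exactly the routes the slow robber cannot use to escape, which is what keeps the copwidth down.

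For the copwidth upper bound I would describe a non\-/monotone sweep with $n$ cops. The cops first occupy the hub separator, splitting the graph into pieces that communicate only through the long bundles, and then clear the cliques one at a time, reusing the same cops. The key invariant is maintained through the speed constraint: after clearing a clique the cops advance their frontier more than $s$ steps into each outgoing long path before releasing the clique, so that any robber still at large on the far side cannot reach the just\-/cleared region in the round in which it becomes unguarded. Because at most one clique together with a bounded frontier of path\-/crossing vertices is ever active, the sweep stays within budget $n$; invisibility is handled, as usual, by arguing that the set of possible hiding spots is emptied rather than by tracking the robber.

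For the pathwidth lower bound I would show directly that $n+g$ cops lose the unbounded invisible active game, i.e.\ the set of possible hiding spots can be kept nonempty forever. The robber's strategy treats each long bundle as a set of single edges (he traverses a cleared long path in one move), so he effectively plays on the contracted graph in which the cliques are densely interconnected through the hub and the bundles; there one shows that against fewer than $n+g+1$ cops there is always an uncovered clique vertex reachable from the current contaminated region, using that removing fewer than $n+g$ vertices cannot separate the dense core. Equivalently one can present an explicit interval/separator obstruction certifying a bag of size $n+g+1$, which is where the hypothesis $n\geq 2g+2$ enters: it guarantees enough pairwise disjoint attachment points inside each size\-/$n$ clique to support $g$ extra simultaneously open threats beyond the $n$ already forced by a single clique.

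The main obstacle is reconciling the two bounds, because the two natural moves pull in opposite directions: lengthening the connecting paths, needed so that the speed\-/$s$ robber is trapped, is a subdivision, and subdivision tends to \emph{decrease} pathwidth, threatening to destroy the very lower bound one is trying to establish. The construction must therefore add just enough un\-/subdivided, short connectivity (the hub and the terminal clique) to pin the pathwidth at $n+g$ while leaving every escape route of the slow robber long. Verifying quantitatively that this added structure raises the pathwidth by exactly $g$ and simultaneously does not let the speed\-/$s$ robber slip through the hub, so that $n$ cops really suffice, is the delicate heart of the argument, and is where I expect the bulk of the technical work to lie: the separator/obstruction analysis for the lower bound, and the recontamination bookkeeping against the speed bound for the upper bound.
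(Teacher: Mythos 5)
Your guiding intuition (long paths are impassable for a speed\-/$s$ robber in a single round but are harmless subdivisions for a clique minor, so dense short connections must carry the pathwidth) is the right one, but the proposal has genuine gaps in both directions. First, the construction is never pinned down, and without it neither bound can be checked. The paper's graph is concrete: three cliques $A$, $B$, $C$ with $|A|=n-g$ and $|B|=|C|=g$, a complete join between $A$ and $B\cup C$, and pairwise disjoint $B$\==$C$ paths with $3s-2$ internal vertices. The pathwidth lower bound is then a one\-/line observation that $K_{n+g}$ is a minor (contract each $B$\==$C$ path into its $B$\-/endpoint); you instead propose an unbounded\-/speed evasion argument with an unspecified ``separator/obstruction analysis,'' which is exactly the part you defer and which the minor argument makes unnecessary. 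You also misplace the hypothesis $n\geq 2g+2$: it plays no role in the lower bound; it is needed in the upper bound so that while the $2g$ cops sit on $B\cup C$ (or on two cliques of the rotation) there remain at least two spare cops to sweep the long paths.

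Second, your upper\-/bound sketch does not survive quantitative scrutiny. ``Advance the frontier more than $s$ steps into \emph{each} outgoing long path before releasing the clique'' requires a cop per path, and the number of disjoint long paths must grow with $g$ (in the paper's graph there are $g^2$ of them), so this frontier alone would exceed the budget $n$. The actual mechanism is different: the cops cycle through the occupation pattern $B\cup C\to A\cup C\to A\cup B\to B\cup C$, always holding two of the three cliques, so that a robber who slips into a $B$\==$C$ path (total length $3s-1$) gains at most $s$ steps per rotation step and after three steps still cannot have reached the guarded far endpoint; only then do the two spare cops clear the paths while $B\cup C$ is held. This recontamination bookkeeping against the speed bound is the heart of the proof, and the proposal does not contain it.
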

We give the construction of \iaNonMon sgn here without proof.
\iaNonMon sgn is a disjoint union of three cliques $A,B,C$, where $|A|\coloneqq n-g$ and $|B|=|C| \coloneqq g$.
All vertices in $B$ and $C$ are connected to all vertices in $A$.
Furthermore all vertices in $B$ are connected to all vertices in $C$ via disjoint paths with $3s-2$ internal vertices.
\section{Cop-Monotonicity and the Visible Variants}
From \cite{seymour1993graph}, it is known that $\vacopwidth{\infty}$ has the cop\-/monotonicity property, but we see that this is not the case for $\vacopwidth{s}$ for finite $s \in \N$. The idea is that while searching non\-/monotonously, we can temporarily remove cops if we can guarantee that, because of his speed bound, the vertex is currently not within reach of the robber. Our constructions follow the simple blueprint of a complete $n$\=/ary tree where every node is connected by long paths to each ancestor. Playing classically the cops just need to consider the tree strucure, because they can catch the robber if he moves on a such a path. Monotonously however they need to keep the ancestors in mind, as they would need to move back on them to catch the robber on these paths. An example of such a tree can be found in \cref{fig:ExampleTree}.
\begin{figure}
	\centering
	\begin{tikzpicture}[
		every node/.style={circle, draw, inner sep=2pt},
		level distance=10mm,
		level 1/.style={sibling distance=60mm},
		level 2/.style={sibling distance=30mm},
		level 3/.style={sibling distance=15mm}]
	\node (R) {R}
	  child {node (A1) {A1}
		child {node (B1) {B1}
			child {node (C1) {C1}}
			child {node (C2) {C2}}
		}
		child {node (B2) {B2}
			child {node (C3) {C3}}
			child {node (C4) {C4}}
		}
	  }
	  child {node (A2) {A2}
		child {node (B3) {B3}
			child {node (C5) {C5}}
			child {node (C6) {C6}}
		}
		child {node (B4) {B4}
			child {node (C7) {C7}}
			child {node (C8) {C8}}
		}
	  };
	  \draw[backPath, bend left=15] (A1) to (R);
	  \draw[backPath, bend right=15] (A2) to (R);

	  \draw[backPath, bend left=35] (B1) to (R);
	  \draw[backPath, ] (B2) to (R);
	  \draw[backPath, ] (B3) to (R);
	  \draw[backPath, bend right=35] (B4) to (R);

	  \draw[backPath, bend left=50] (C1) to (R);
	  \draw[backPath, bend left=10] (C2) to (R);
	  \draw[backPath, bend left=20] (C3) to (R);
	  \draw[backPath, bend left=0] (C4) to (R);
	  \draw[backPath, bend left=0] (C5) to (R);
	  \draw[backPath, bend right=20] (C6) to (R);
	  \draw[backPath, bend right=10] (C7) to (R);
	  \draw[backPath, bend right=50] (C8) to (R);

	  \draw[backPath, bend right=15] (B1) to (A1);
	  \draw[backPath, bend left=15] (B2) to (A1);
	  \draw[backPath, bend right=15] (B3) to (A2);
	  \draw[backPath, bend left=15] (B4) to (A2);

	  \draw[backPath, bend right=15] (C1) to (A1);
	  \draw[backPath, bend left=0] (C2) to (A1);
	  \draw[backPath, bend left=0] (C3) to (A1);
	  \draw[backPath, bend left=15] (C4) to (A1);
	  \draw[backPath, bend right=15] (C5) to (A2);
	  \draw[backPath, bend left=0] (C6) to (A2);
	  \draw[backPath, bend left=0] (C7) to (A2);
	  \draw[backPath, bend left=15] (C8) to (A2);

	  \draw[backPath, bend right=15] (C1) to (B1);
	  \draw[backPath, bend left=15] (C2) to (B1);
	  \draw[backPath, bend right=15] (C3) to (B2);
	  \draw[backPath, bend left=15] (C4) to (B2);
	  \draw[backPath, bend right=15] (C5) to (B3);
	  \draw[backPath, bend left=15] (C6) to (B3);
	  \draw[backPath, bend right=15] (C7) to (B4);
	  \draw[backPath, bend left=15] (C8) to (B4);

	\end{tikzpicture}
	\caption{A similar graph to $G_4$ from \cref{thmt@@TheoremVisActCopmonotonicity}. In $G_4$ each node has 3 children, but the main ideas should become clear from this picture. Blue dashed lines represent paths of length $2s$. For \cref{thmt@@TheoremVisLazyCopmonotonicity} the length is $s+1$.}
	\label{fig:ExampleTree}
\end{figure}
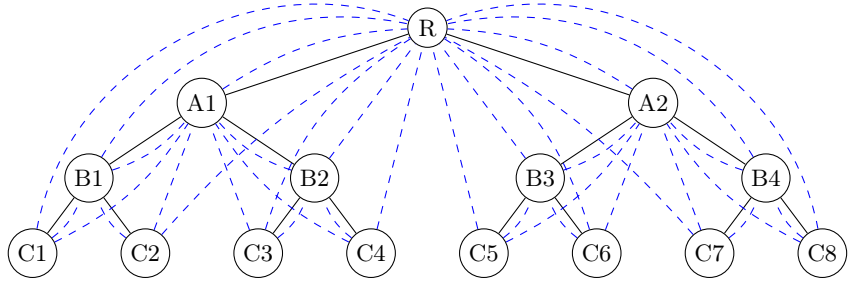

\begin{restatable}{theorem}{TheoremVisActCopmonotonicity}
For every $s\in\N$ there is a family of graphs $G_n$ with $$\vacopwidth{s}(G_n)=3 \ \ \text{and} \ \ \cmvacopwidth{s}(G_n)=n.$$
\end{restatable}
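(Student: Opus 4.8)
The plan is to make precise the graph sketched in \cref{fig:ExampleTree}. Fix $s$ and, for $n\ge 3$, let $G_n$ be the complete $(n-1)$\=/ary tree of depth $n-1$ in which, additionally, every node is joined to each of its proper ancestors by an internally disjoint path of length $2s$ (the blue dashed "back\-/paths"). I would first dispose of the two lower\-/order facts. Every internal vertex of a back\-/path has degree exactly $2$ and no vertex has degree $1$, so $G_n$ has minimum degree $2$; by the remark in the preliminaries this already gives $\vacopwidth{s}(G_n)\ge 3$ and, trivially, $\cmvacopwidth{s}(G_n)\ge\vacopwidth{s}(G_n)$. It therefore remains to prove the upper bound $\vacopwidth{s}(G_n)\le 3$, the upper bound $\cmvacopwidth{s}(G_n)\le n$, and the lower bound $\cmvacopwidth{s}(G_n)\ge n$.

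For $\vacopwidth{s}(G_n)\le 3$ I would give a non\-/monotone strategy that treats $G_n$ as if it were only the underlying tree. Two cops perform the standard sweep of the tree (which has tree\-/width $1$) against a visible robber: one cop holds the current separating node $v$ while the second descends into the child subtree that contains the robber, after which the cop on $v$ is lifted and reused. The only way the robber can deviate from this tree play is to step onto a back\-/path joining some node $x$ of his current subtree to an ancestor $a$. Here the speed bound is decisive: the back\-/path has length $2s$, so a robber of speed $s$ needs at least two rounds to traverse it and is therefore visibly stranded on its interior for at least one round. As soon as this happens the cops place cops on the two endpoints $x$ and $a$, which confines the robber to the interior of a single path, and then sweep that path to capture him. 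Since neither the tree sweep nor the back\-/path capture ever requires more than three simultaneous cops, this yields $\vacopwidth{s}(G_n)=3$.

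For the cop\-/monotone value I would prove both bounds around the idea that a monotone search is forced to keep the entire current root\-/to\-/leaf path occupied. The upper bound $\cmvacopwidth{s}(G_n)\le n$ comes from a depth\-/first sweep that always holds a cop on every vertex of the path from the root to the currently processed node; this uses at most $\mathrm{depth}+1=n$ cops, never revisits a vertex (when a subtree is finished its cops are dropped and only fresh siblings are entered), and when the robber retreats into a back\-/path cul\-/de\-/sac both of its endpoints are already occupied, so the deepest cop can be slid along it to finish the capture. The lower bound $\cmvacopwidth{s}(G_n)\ge n$ is an escape argument for the robber against any cop\-/monotone play with only $n-1$ cops. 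The robber descends the tree, and at depth $d$ he forces a cop to sit on each of his $d$ ancestors: because the cops may never revisit a vertex, any ancestor $a$ that is vacated can be recontaminated by climbing the length\-/$2s$ back\-/path from the robber's node to $a$, which a monotone defender cannot block without eventually re\-/occupying a vacated vertex. With $d$ cops thus pinned on ancestors, the branching factor $n-1$ guarantees that the $n-1-d$ remaining cops can never seal off all $n-1$ children of the robber's node while also threatening it, so the robber can always slip one level deeper. Iterating down to a leaf pins $n-1$ cops on ancestors with no cop left to land on the robber, so $n-1$ cops cannot win.

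The main obstacle is making the recontamination step of the lower bound fully rigorous, i.e. proving that against a visible active robber of speed $s$ a cop\-/monotone team genuinely cannot re\-/secure a vacated ancestor along a back\-/path of length $2s$. This is exactly what the calibration "length $2s$ versus speed $s$" is designed to enforce, and verifying it requires a careful round\-/by\-/round analysis of the robber's two\-/round climb against every legal monotone response, showing that each response either lets the robber reach the ancestor or forces the cops to revisit a vertex. I expect the bookkeeping that simultaneously tracks the descent mechanism, the pinned ancestor cops, and these back\-/path threats to be the technical heart of the proof, whereas the upper bounds and the minimum\-/degree lower bound are routine.
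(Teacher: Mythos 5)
Your graph is not quite the paper's, and the difference is fatal to the bound $\vacopwidth{s}(G_n)\le 3$. The paper's $G_n$ for $s>1$ is obtained from the $s=1$ graph by replacing \emph{every} edge --- including the tree edges --- by a path of length $s$, so the back\-/paths have length $2s$ \emph{and} each tree edge costs the robber a full round to traverse. You only lengthen the back\-/paths and keep the tree edges as single edges. Then a speed\-/$s$ robber standing on a core vertex $v$ (with cops on $v$ and its parent) can, in one move, step to an unguarded child $c$ and continue $s-1$ further steps into the back\-/path from $c$ to some distant ancestor $a$. Neither endpoint of that back\-/path is guarded, and neither can be made to block him: a cop placed on $c$ or $a$ in the next round is not in $S_t\cap S_{t+1}$ and hence does not obstruct his move, so he can either return through $c$ into a fresh child or back\-/path, or run to within one step of $a$ and then pass \emph{through} $a$ into a fresh subtree near the root. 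Three cops never establish a persistent separator and the chase resets indefinitely. This also exposes the flaw in your trapping step: ``place cops on the two endpoints, which confines the robber'' is wrong in this game model, because only cops that \emph{remain} in place (those in $S_t\cap S_{t+1}$) restrict the robber's move. The paper's Scenario D avoids both problems: because tree edges are subdivided, the robber can only ever enter a back\-/path from his current core vertex, which already carries a persisting cop, so he is confined to the near half of that back\-/path and cannot get past the far endpoint in a single round.

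Your cop\-/monotone lower bound also rests on the wrong mechanism. The cops are not ``forced to sit on each of the $d$ ancestors,'' and blocking a back\-/path does not require re\-/occupying a vacated vertex --- its internal vertices are fresh, so a monotone team may legally park a cop in the middle of any back\-/path. The paper's argument is instead a single pigeonhole at a leaf: the robber descends one level only when a cop lands on him (so every ancestor of his position has been occupied at some point); at a leaf of depth $n-1$, one of the $n-1$ cops is landing on him, leaving at most $n-2$ cops for the $n-1$ pairwise internally disjoint (ancestor, back\-/path) pairs, so some pair is completely unguarded; he walks that back\-/path in two rounds --- no cop can block it mid\-/flight, since blocking requires membership in $S_t\cap S_{t+1}$ --- and arrives at an ancestor that cop\-/monotonicity forbids the cops from ever re\-/occupying, where he sits uncaught forever. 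You should rebuild both halves of your proof on the paper's subdivided construction.
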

We can apply the construction with slight modifications to the length of paths to prove the same result for the visible lazy version.
\begin{restatable}{theorem}{TheoremVisLazyCopmonotonicity}
 For every $s\in \N$ there is a family of graphs $G_n$ with $$\vlcopwidth{s}(G_n)=3 \ \ \text{and} \ \ \cmvlcopwidth{s}(G_n)=n.$$
\end{restatable}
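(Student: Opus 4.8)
The plan is to reuse the tree construction of \cref{thmt@@TheoremVisActCopmonotonicity} without change to its combinatorics (see \cref{fig:ExampleTree}): $G_n$ is a complete $(n-1)$-ary tree whose height is chosen so that every root-to-leaf path has exactly $n$ vertices, and in which every node is joined to each of its ancestors by an internally disjoint \emph{back-path}. The only modification is that each back-path now has length $s+1$ (that is, $s$ internal vertices) rather than $2s$. The single structural fact that drives everything is $s+1>s$: a robber of speed $s$ can never cross a back-path from one endpoint to the other in a single move, so each back-path is a genuinely separate region that must be handled on its own.

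For the upper bound I would not write a strategy by hand but invoke the identity $\vlcopwidth{s}=\delta_s+1$ of Richerby and Thilikos \cite{doi:10.1137/090780006} (\cref{figResults}), reducing the claim $\vlcopwidth{s}(G_n)=3$ to $\delta_s(G_n)=2$. The bound $\delta_s(G_n)\ge 2$ is immediate: every internal back-path vertex has degree $2$ and no vertex has smaller degree, so the degeneracy, and hence $\delta_s$, is at least $2$. For the matching upper bound I exhibit the layout that lists the tree nodes in order of increasing depth (root first) and then all internal back-path vertices. For a tree node $v$, deleting its parent separates $v$ from every earlier (shallower) tree node within the tree, and no earlier vertex is reachable through a back-path because that would cost at least $s+1>s$ steps; a blocker of size $1$ therefore suffices. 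For an internal back-path vertex its two path-neighbours isolate it, so a blocker of size $2$ suffices. This yields $\delta_s(G_n)\le 2$, hence $\vlcopwidth{s}(G_n)=3$.

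For the cop-monotone value I prove $\cmvlcopwidth{s}(G_n)=n$ in two steps. The easy direction $\le n$ is a monotone chase of the visible robber with $n$ cops, mirroring the active case: the cops keep one cop on every vertex of the current root-to-robber path and follow him downward, never vacating an ancestor. Since the robber is visible and lazy and every ancestor endpoint stays guarded, he can only be pushed deeper until he is trapped on a leaf or on a back-path between two occupied tree nodes, where the $n$-th cop catches him, and no vertex is ever revisited. The hard direction is $\cmvlcopwidth{s}(G_n)\ge n$, an adversary robber beating $n-1$ cop-monotone cops. The robber descends the tree maintaining the invariant that he sits at a node $v$ whose entire subtree is still contaminated. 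The key counting step is that, by cop-monotonicity, keeping the cleared part clean forces the cops to permanently guard each proper ancestor of $v$: if some ancestor $a_i$ were left free, the robber could reach it in two moves along the length-$(s+1)$ back-path from $v$ to $a_i$, recontaminating a vertex the cops may not re-enter. Thus reaching depth $d$ costs $d$ committed cops, leaving at most $n-1-d$ to act on the subtree of $v$; since $v$ has $n-1$ children, whenever a cop is about to land on the lazy robber at least one child subtree is still fully contaminated and he escapes into it, increasing the depth. At depth $n-1$ this would demand $n-1$ guards plus a capturing cop, which $n-1$ cops cannot supply.

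The main obstacle is making this ancestor-commitment count rigorous. I must argue that the cops genuinely cannot amortise the guarding of the many back-paths leaving a contaminated subtree with fewer than one cop per ancestor level, and that the laziness of the robber — he moves only when about to be captured — gives the cops no extra slack to re-sweep a back-path before his two-move escape completes. I expect to handle this by tracking, at each forced move, the set of \emph{live} back-paths, namely the contaminated back-paths whose upper endpoint lies in the already-cleared region, and showing that each such endpoint must carry its own cop, so that the number of cops simultaneously on the board is at least the current depth plus one. The value $s+1$ (rather than the active variant's $2s$) is precisely what this bookkeeping requires: it is the shortest length that simultaneously forbids a one-move crossing in the lower bound and keeps every back-path longer than the robber's reach in the upper bound.
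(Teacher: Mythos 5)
Your construction is exactly the paper's (an $(n-1)$\-/ary complete tree of height $n-1$ with back\-/paths of length $s+1$ to all ancestors), and your cop\-/monotone upper bound matches the paper's. For the value $\vlcopwidth{s}(G_n)=3$ you take a genuinely different and arguably cleaner route: instead of spelling out a three\-/cop strategy by case distinction on where the robber sits (which is what the paper does, via the scenario analysis inherited from the visible active proof), you invoke the identity $\vlcopwidth{s}=\delta_s+1$ of Richerby and Thilikos and verify $\delta_s(G_n)=2$ with the depth\-/first layout. That computation is correct (the lower bound from minimum degree $2$, the upper bound because every back\-/path has length $s+1>s$ and so cannot be crossed within the robber's reach), and it buys you a shorter argument at the price of importing a nontrivial external theorem; the paper's explicit strategy is self\-/contained and reusable for the active variant.

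The gap is in the lower bound $\cmvlcopwidth{s}(G_n)\geq n$. Your plan is to show that the cops are \emph{forced} to keep one cop permanently on every proper ancestor of the robber's current node, and you correctly flag that making this "commitment count" rigorous is the hard part. It is also the wrong framing: the cops are not forced to guard the ancestors, and trying to prove that they are will not go through, because nothing in the rules prevents them from vacating an ancestor at any time. What actually closes the argument is a one\-/shot dichotomy at the moment the robber, sitting on a leaf, is about to be stepped on. Since the robber descended only when a cop landed on him, every one of the $n-1$ proper ancestors of the leaf has already been visited, hence is permanently forbidden to the cop\-/monotone cops once vacated. With $n-1$ cops, one of which is being placed on the leaf, at most $n-2$ cops remain to cover $n-1$ disjoint escape routes (each consisting of an ancestor together with its private back\-/path from the leaf); each cop can block at most one such route, so some previously visited ancestor $a$ is unguarded and its back\-/path is entirely cop\-/free. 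The lazy robber runs $s$ steps to the back\-/path vertex adjacent to $a$, and if ever threatened there he steps onto $a$, which the cops can never reoccupy. No bookkeeping of "live back\-/paths" across the whole play and no contamination invariant (which is not even well defined for a visible robber) is needed; the only invariant you must maintain is that all ancestors of the robber's position have been visited, and that is automatic from his descent rule.
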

\section{Robber Monotonicity and Lazy Variants}
\label{SecitonColouringNumbers}
The bounded speed lazy variants and their robber\-/monotone analogues have a rich interaction with the strong colouring numbers. By plugging together trivial bounds and those discussed in \cite{dendris1997fugitive}\footnote{Here the authors showed the equivalence of the robber\-/monotone invisible lazy copwidth and a graph parameter called the $s$\=/elimination dimension plus one. The $s$\=/elimination dimension is defined nearly identically to the strong $s$\=/colouring number, with the only difference being that in the former, the paths that lead up in the order are counted. In contrast, the paths that lead down in the order are counted in the latter.} and \cite{torunczyk2023flip} we get the following lemma.
\begin{restatable}{lemma}{LemmaFuncEquivCaR}
    For every speed $s\in \N$, we have the relations depicted in the diagram shown in \cref{fig:LazyAndCoulorNumbers}. Moreover, it follows that those four copwidth variants, $s$\=/admissibility and the generalised colouring numbers are all pairwise functionally equivalent.
\end{restatable}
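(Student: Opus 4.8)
The plan is to treat this lemma as an \emph{assembly} of known exact characterisations together with the trivial comparison inequalities between the four lazy variants, so that every parameter appearing in \cref{fig:LazyAndCoulorNumbers} gets trapped between $\adm{s}$ and a fixed function of $\adm{s}$. The two anchor results I would import are: first, the characterisation of Dendris, Kirousis and Thilikos \cite{dendris1997fugitive} giving $\rmilcopwidth{s}(G)=\scol{s}(G)+1$ (their statement is phrased via the $s$\=/elimination dimension, which coincides with $\scol{s}$ after reversing the layout, as noted in the footnote); and second, the characterisation of Richerby and Thilikos \cite{doi:10.1137/090780006} giving $\vlcopwidth{s}(G)=\delta_s(G)+1$ together with the functional equivalence of $\delta_s$ and $\adm{s}$. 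I would then recall from the preliminaries that $\adm{s}$, $\scol{s}$ and $\wcol{s}$ are pairwise functionally equivalent \cite{DVORAK2013833,nevsetvril2012sparsity}, and fix a function $\alpha$ with $\scol{s}(G)\le\alpha(\adm{s}(G))$ for all $G$.

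The second ingredient is four inequalities read directly off the definitions, each justified in a line or two. Restricting the cops to robber\-/monotone strategies can only increase the number of cops needed, giving $\ilcopwidth{s}\le\rmilcopwidth{s}$ and $\vlcopwidth{s}\le\rmvlcopwidth{s}$. Passing from an invisible to a visible robber can only help the cops, giving $\vlcopwidth{s}\le\ilcopwidth{s}$ and $\rmvlcopwidth{s}\le\rmilcopwidth{s}$. For the first of these two I use that an invisible strategy catches every legal trajectory, hence in particular the one a visible robber follows. For the second I use the same fact together with the observation that a robber\-/monotone invisible strategy, being robber\-/monotone against \emph{all} legal trajectories, keeps the condition $R_t\notin\bigcup_{t'<t}S_{t'}$ valid for the (legal) trajectory of a visible robber; thus replaying it against the visible robber both catches him and is robber\-/monotone in the sense of our definition.

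Combining these with the anchor equalities yields the sandwich
\begin{align*}
\adm{s}(G)+1 \;\le\; \vlcopwidth{s}(G) \;\le\; \ilcopwidth{s}(G) \;\le\; \rmilcopwidth{s}(G) \;=\; \scol{s}(G)+1 \;\le\; \alpha\!\left(\adm{s}(G)\right)+1,
\end{align*}
where the leftmost step uses $\vlcopwidth{s}=\delta_s+1\ge\adm{s}+1$; moreover $\rmvlcopwidth{s}(G)$ satisfies $\vlcopwidth{s}(G)\le\rmvlcopwidth{s}(G)\le\rmilcopwidth{s}(G)$ and hence lies in the same interval. Thus each of the four copwidth variants is bounded below by $\adm{s}+1$ and above by $\alpha(\adm{s})+1$, so all four are functionally equivalent to $\adm{s}$, and by the preliminaries also to $\scol{s}$ and $\wcol{s}$; transitivity of functional equivalence then gives the pairwise claim. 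The individual arrows of \cref{fig:LazyAndCoulorNumbers} are exactly the inequalities and equalities used above.

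I expect the only genuine obstacle to be the bookkeeping around the two imported characterisations rather than any new game argument: reconciling the $s$\=/elimination dimension of \cite{dendris1997fugitive} with $\scol{s}$ under layout reversal, and making the visibility inequality $\rmvlcopwidth{s}\le\rmilcopwidth{s}$ precise under our generalised notion of robber\-/monotonicity, where for the invisible variant "robber\-/monotone" must be read as the condition holding simultaneously for every possible hiding trajectory (equivalently, the monotone shrinking of the possible\-/hiding region). Everything else is transitive chaining of inequalities, and the bounds corroborated in \cite{torunczyk2023flip} fit into the same interval.
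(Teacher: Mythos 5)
Your proposal is correct and follows essentially the same route as the paper, which likewise obtains the lemma by chaining the trivial comparison inequalities between the four lazy variants (visibility only helps the cops; robber\-/monotonicity only restricts them, with the replay argument you give for $\rmvlcopwidth{s}\leq\rmilcopwidth{s}$) with the imported characterisations $\rmilcopwidth{s}=\scol{s}+1$ and $\vlcopwidth{s}=\delta_s+1\geq\adm{s}+1$, and then invoking the known functional equivalence of $\adm{s}$, $\scol{s}$ and $\wcol{s}$. The only aspect your sandwich does not address is the strictness of the arrows on some graphs, which the paper takes from the cited references and, for the monotone versus non\-/monotone gaps, from its own recontamination construction for $s\geq 4$.
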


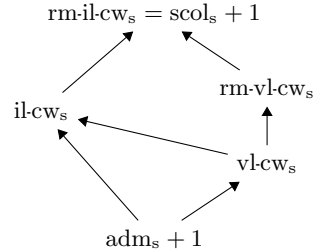
\begin{figure}
    \centering
    \begin{tikzpicture}
            \node (adm) at (0, -1) {$\adm{s}+1$};
            \node (il) at (-1.5, 0.7) {$\ilcopwidth{s}$};
            \node (vl) at (1.5, 0) {$\vlcopwidth{s}$};
            \node (rmvl) at (1.5, 1) {$\rmvlcopwidth{s}$};
            \node (scol) at (0, 2) {$\rmilcopwidth{s}=\scol{s}+1$};
    
            \draw[strictlysmaller] (adm) -- (il);
            \draw[strictlysmaller] (adm) -- (vl);
            \draw[strictlysmaller] (vl) -- (rmvl);
            \draw[strictlysmaller] (rmvl) -- (scol);
            \draw[strictlysmaller] (il) -- (scol);
            \draw[strictlysmaller] (vl) -- (il);
    \end{tikzpicture}    
    \caption{Relations between the colouring numbers and lazy copwidth variants for any $s>1$. Each arrow represents a strict inequality, note that the differences between the robber\-/monotone and non\-/monotone variants are only known for $s\geq 4$.}
    \label{fig:LazyAndCoulorNumbers}
\end{figure}

For the relationship between the invisible lazy variant and its robber\-/monotone version, it is known from \cite{dendris1997fugitive} that both are equivalent in the unbounded speed setting. This result is called "Recontamination does not help" because the potential hiding spots of an invisible robber can be thought of as a gas that spreads along unguarded edges of the graph. Then, the absence of recontamination of already cleared vertices corresponds to a robber\-/monotone cop strategy. \\
However, we get that robber\-/monotonicity is indeed not the case in the bounded speed setting, thus allowing for recontamination does help to reduce the number of cops necessary to remove the gas from the graph or, respectively, catch the invisible robber.
For all $s\geq 4$, we show this non\-/monotonicity property both for the visible lazy variant, confirming the corresponding conjecture from \cite{doi:10.1137/090780006} and the invisible lazy variant, confirming the corresponding conjecture from \cite{dendris1997fugitive}. \\
For recontamination to be a benefit to the cops, there must be a vertex $v$ that is crucial in restricting robber movement while not being a suitable hiding spot itself.
So the cops do not want to let the robber move \emph{over} $v$ but are okay with him moving \emph{on} $v$.
This allows the cops to block $v$ in some early rounds and later force the robber to move on it and catch him there.
In the unbounded speed setting, such a construction is impossible, as allowing the robber to move on a particular vertex is always connected with allowing him to move past it.
\begin{restatable}{theorem}{TheoremLazyRecontamination}
For finite $s\geq 4$, the visible lazy and invisible lazy variants of speed $s$ do not have the robber\-/monotonicity property.
\end{restatable}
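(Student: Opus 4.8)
The plan is to reduce the whole statement, for each fixed finite $s\ge 4$, to the construction of a single graph $G=G(s)$ and an integer $k$ for which one can establish the two inequalities $\ilcopwidth{s}(G)\le k$ and $\rmvlcopwidth{s}(G)\ge k+1$. Combined with the elementary monotone relations $\vlcopwidth{s}\le\ilcopwidth{s}$ and $\rmvlcopwidth{s}\le\rmilcopwidth{s}$ recorded in \cref{fig:LazyAndCoulorNumbers}, these two inequalities give both desired separations simultaneously, since
\begin{align*}
  \vlcopwidth{s}(G)\ \le\ \ilcopwidth{s}(G)\ \le\ k\ <\ k+1\ \le\ \rmvlcopwidth{s}(G)\ \le\ \rmilcopwidth{s}(G).
\end{align*}
So it is enough to produce one graph that admits a cheap recontaminating invisible strategy yet has an expensive robber\-/monotone visible value.

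The graph I would build realises the informal picture sketched before the statement: a single \emph{bottleneck} vertex $v$ that the cops must block in order to confine the robber, but which is itself a poor hiding place. Concretely, $v$ separates a cluster $U$ of reservoir vertices (where the robber may sit) from a target region, and $U$ and the target are in addition joined through $v$ by internally disjoint paths whose internal length is calibrated to $s$. The calibration is chosen so that (i) while $v$ is unguarded the robber can cross from one side to the other within his speed budget $s$, yet (ii) he can never both threaten a region and retreat back over $v$ within $s$ steps; it is exactly requirement (ii) that forces the internal length to be at least $2$ on each side of $v$, and hence needs $s\ge 4$. I would set $k=\scol{s}(G)$, computed directly from the layout structure of $G$.

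For the upper bound I would give an explicit $k$\-/cop strategy against the invisible lazy robber that deliberately recontaminates $v$. The cops first occupy $v$ and use the remaining $k-1$ cops to clear one side; then, crucially, the cop on $v$ is removed and all $k$ cops are redeployed to clear the target region. Because the robber has speed only $s$ and $v$ is not a hiding spot, any potential robber position that floods back over $v$ can be swept up — a robber who steps onto $v$ can be caught there on the next round, being unable to reach safety in time — so no $(k+1)$\-/st cop is ever needed. Formalising this as a valid history\-/free strategy function and checking that the width stays $\le k$ is routine once the path lengths are fixed, and it yields $\ilcopwidth{s}(G)\le k$ and therefore also $\vlcopwidth{s}(G)\le k$.

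The main obstacle is the robber\-/monotone lower bound $\rmvlcopwidth{s}(G)\ge k+1$. For the invisible variant it comes for free from the cited equality $\rmilcopwidth{s}=\scol{s}+1$ of \cite{dendris1997fugitive} together with $\scol{s}(G)=k$, but this only bounds $\rmvlcopwidth{s}$ from above, so the visible robber\-/monotone value must be handled by a direct adversary argument. Here I would design an evasion strategy for a visible lazy speed\-/$s$ robber against any $k$\-/cop strategy constrained to be robber\-/monotone, maintaining the invariant that some reservoir vertex always lies in a never\-/occupied component of $G$ minus the current cops. With only $k$ cops one cannot both keep $v$ cleared and seal off the reservoir, since such a separation would require more than $\scol{s}(G)=k$ vertices; and whenever the cops do clear $v$, the robber exploits his speed to slip across a calibrated path before the region is sealed, so forcing him back would mean sending him onto a previously occupied vertex — a violation of robber\-/monotonicity. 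The delicate points, and where $s\ge 4$ is genuinely used, are (a) verifying that every robber move respects the laziness constraint $R_t\neq R_{t+1}\Rightarrow R_t\in S_{t+1}$, i.e. that he moves only when actually threatened, and (b) checking that the path lengths permit the robber's required crossings while ruling out the cops' attempted shortcuts across $v$.
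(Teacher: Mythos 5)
Your high-level reduction is exactly the one the paper uses: build a single graph $G$ per speed, prove the stronger upper bound $\ilcopwidth{s}(G)\le k$ (which gives $\vlcopwidth{s}(G)\le k$) via a deliberately recontaminating strategy, and prove the stronger lower bound $\rmvlcopwidth{s}(G)>k$ (which gives $\rmilcopwidth{s}(G)>k$) via a direct evasion strategy for a visible lazy robber against robber\-/monotone cops. You also correctly identify that the $\rmilcopwidth{s}=\scol{s}+1$ equality is useless for the visible lower bound, and your intuition about the bottleneck vertex $v$ matches the paper's guiding idea.

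The gap is that the construction and the counting \emph{are} the proof here, and your proposal contains neither. The paper's graph (see \cref{fig:LazyRecontamination}) is not a single reservoir behind a bottleneck: it needs \emph{two} 8\-/cliques $A$ and $B$ on the far side of $v$, a small target clique $C$, four length\-/$4$ paths from each of $A$ and $B$ to $C$, and a length\-/$4$ path from $v$ to $C$, with the clique sizes and path counts calibrated so that exactly $10$ cops suffice with recontamination while $10$ robber\-/monotone cops fail. The lower bound is a bespoke case analysis on where the robber sits ($A$ versus $B$ versus $C$ versus the connecting paths), showing that any attempt to corner him forces a cop back onto $v$. Your substitute for this analysis --- ``such a separation would require more than $\scol{s}(G)=k$ vertices'' --- is not a valid step: $\scol{s}$ is defined via layouts and does not directly lower\-/bound the size of any separator the cops would need, and you assert $\scol{s}(G)=k$ without an argument (proving $\scol{s}(G)\ge k$ over all layouts is itself nontrivial and, in the paper, is a \emph{consequence} of the game lower bound rather than an input to it). Until you exhibit a concrete graph and verify both inequalities with explicit numbers, the proposal is a restatement of the proof obligation rather than a proof.
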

The construction used in the proof is presented in \cref{fig:LazyRecontamination}. Our construction can not be trivially modified for the case $s\in \{2,3\}$ and it is questionable whether the theorem actually holds for these. The visible lazy variants of speed $1,2,3$ and $\infty$ are known to be decidable in $\mathbf{P}$, while this decision problem is $\mathbf{NP}$-hard for all other finite speeds, as shown in \cite{doi:10.1137/090780006}. This property may coincide with the robber\-/monotonicity property, but more research is needed to confirm this conjecture.  

\begin{figure}
        \centering
        \farbigergraph{white,white,white,white,white,white,white,white,white,white,white,white,white,white,white,white,white,white,white,white,white,white,white,white,white,white,white,white,white,white,white,white,white,white,white,white,white,white}
\caption{An example graph $G$ with $\vlcopwidth{4}(G)\leq\ilcopwidth{4}(G)\leq10$ and $\rmilcopwidth{4}(G)\geq\rmvlcopwidth{4}(G)>10$. $K_n$ denotes the n-clique.} 
    \label{fig:LazyRecontamination}
    \end{figure}

Thus, both lazy variants do not have the robber\-/monotonicity property for finite speeds $s\geq4$. However, in contrast to similar results of the last sections, the monotonicity cost is bounded by \cref{thmt@@LemmaFuncEquivCaR}.

At last, we want to prove that the visible lazy variant with unboudned speed has both monotonicity properties as conjectured in \cite{doi:10.1137/090780006}. To this end we define the notion of a \emph{funnel} on a graph to build our argument.
\begin{definition}
    Let $G=(V,E)$ be a graph and $S\subseteq V$. We define the \emph{cut} (you could also say neighbourhood) of $S$ in $G$ as  
    $$cut_G(S) := \{v\in V\setminus S | \ \exists s \in S \ (v,s)\in E\}.$$
    If $G$ is clear from  context we will omit the index.    
\end{definition}
\begin{definition}
    Let $G=(V,E)$ be a graph, and $A\subseteq V$. We call $F: A \rightarrow \mathcal{P}ot(A)$ a \emph{funnel} on $A$ if $\forall a \in A$:
    \begin{itemize}
        \item[(F1)] $a \in F(a)$
        \item[(F2)] $\forall b \in F(a) \setminus \{a\}: \ a \notin F(b)$
        \item[(F3)] $F(a)$ induces a conncted subgraph of $G$
    \end{itemize}

    We call $\operatorname*{max}\{|cut(F(a))| \ | \ a \in A\}$ the width of the funnel. 
    If it additionally satisfies 
    \begin{itemize}
        \item[(F4)] $\forall a \in A \forall b \in F(a) \ F(b)\subseteq F(a)$,
    \end{itemize}
    we say that the funnel is monotone.
\end{definition}
    Funnels can be seen as a similar notion to partial cop strategies in visible lazy unbounded speed games. For a given vertex $v$ in some region $A$ of the graph, a funnel $F$ assigns targets $F(v)$ in $A$ that describe where the robber is forced to move to next. $(\mathit{F1})$ demands that his current position is part of the movement. $(\mathit{F2})$ demands that the robber is not allowed to go back in the next round. $(\mathit{F3})$ demands that these movements take place in one connected subgraph. The width of the funnel plus one is then the amount of cops necessary to enforce this movement by guarding the border of the targets $F(v)$, and one cop initiating the movement by stepping on the robbers position. $(\mathit{F4})$ ensures that this is a successful and also monotone strategy on $A$ by decreasing the number of possible robber movements from one round to the next. For $A=V$ we get the following.
\begin{restatable}{lemma}{LemmaVisLazyFunnelToStrat}
        Let $G=(V,E)$ be a graph and $F: V \rightarrow \mathcal{P}ot(V)$ a monotone funnel of width $k$ then $\cmvlcopwidth{\infty}(G)\leq k+1$.
\end{restatable}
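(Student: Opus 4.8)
The plan is to convert the monotone funnel $F$ directly into a cop\-/monotone strategy for $k+1$ cops against a visible lazy robber of unbounded speed, spending two rounds per "level" of the funnel. I would describe the strategy as a partial, history\-/free strategy function reacting to the current configuration $(S,R)$: if $S=cut(F(R))$ the cops \emph{force} by moving to $cut(F(R))\cup\{R\}$, and otherwise they \emph{retarget} to $cut(F(R))$. Starting from $S_0=\emptyset$ and the robber's initial vertex $v_0$, an easy induction shows that every reachable configuration has one of two shapes: a \emph{trapped} configuration $(cut(F(v)),v)$ or a \emph{forced} configuration $(cut(F(v))\cup\{v\},v')$ with $v'\in F(v)\setminus\{v\}$. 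Since each cop set has size at most $|cut(F(v))|+1\le k+1$ by the definition of the funnel width, the width bound is immediate.

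The correctness argument has two ingredients. For \textbf{confinement}, consider a force step out of a trapped configuration $(cut(F(v)),v)$. The cops that remain on the board during the robber's move are exactly those in both the old and new sets, i.e. $cut(F(v))$; since $cut(F(v))$ separates $F(v)$ from the rest of $G$ and $F(v)$ is connected by $(\mathit{F3})$, the robber's reachable region is exactly $F(v)$. Because a cop lands on his current vertex $v$, the laziness constraint forces him off $v$, so he ends at some $v'\in F(v)\setminus\{v\}$, or, in the case $F(v)=\{v\}$, he has no legal move and is caught. During a retarget step, by contrast, no cop is placed on the robber's vertex, so the lazy robber stays put while the cops reposition.

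For \textbf{progress and termination} I would combine $(\mathit{F4})$ and $(\mathit{F2})$: from $v'\in F(v)\setminus\{v\}$ we get $F(v')\subseteq F(v)$ and $v\notin F(v')$, hence $F(v')\subseteq F(v)\setminus\{v\}$ is a proper subset. Thus along any play the robber positions $v_0,v_1,\dots$ yield a strictly shrinking chain $F(v_0)\supsetneq F(v_1)\supsetneq\cdots$, which on a finite graph reaches a single\-/vertex funnel after finitely many levels, at which point the robber is caught by the base case above.

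The step I expect to be the main obstacle is \textbf{cop\-/monotonicity}, i.e. that once a cop vacates a vertex it is never reoccupied, equivalently that each vertex lies in the cop set during a contiguous block of rounds (which is precisely $S_i\cap S_k\subseteq S_j$ for $i<j<k$). Writing $F_i:=F(v_i)$, the cop sets visited are exactly $cut(F_i)$ and $cut(F_i)\cup\{v_i\}$ in order of increasing $i$. For a fixed vertex $u$, the nesting $F_0\supsetneq F_1\supsetneq\cdots$ makes $\{i:u\in F_i\}$ an initial segment $\{0,\dots,p_u\}$; and for $i>p_u$ (so $u\notin F_i$) membership $u\in cut(F_i)$ is \emph{downward closed}, since $u$ adjacent to $F_{i'}\subseteq F_i$ forces $u$ adjacent to $F_i$, so $\{i>p_u:u\in cut(F_i)\}$ is an interval starting at $p_u+1$. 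I would then split into the case where $u$ is some robber vertex $v_{p_u}$ and the case where $u$ is never a robber vertex, and check in each that the occupied rounds form one contiguous interval: the single extra occupation at the force round $cut(F_{p_u})\cup\{v_{p_u}\}$ glues onto the interval coming from the $cut(F_i)$ memberships. This contiguity gives cop\-/monotonicity, and together with the width bound and the capture argument it yields $\cmvlcopwidth{\infty}(G)\le k+1$.
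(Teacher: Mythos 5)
Your proposal is correct and follows essentially the same route as the paper: the identical two\-/case strategy function (retarget to $cut(F(R_i))$, then force by adding a cop on $R_i$), the same confinement argument via $(\mathit{F3})$ and the cut separating $F(v)$ from the rest, the same strictly descending chain $F(v_0)\supsetneq F(v_1)\supsetneq\cdots$ from $(\mathit{F2})$ and $(\mathit{F4})$ for termination, and a cop\-/monotonicity verification resting on the same two facts (once a vertex leaves the current funnel image it never returns, and adjacency to a nested $F_{i'}\subseteq F_i$ implies adjacency to $F_i$). Your reformulation of $S_i\cap S_k\subseteq S_j$ as contiguity of each vertex's occupation interval is just a repackaging of the paper's direct membership chase.
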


It turns out that the converse also holds, even if we drop the monotonicity requirement. We can build a funnel in steps from a given strategy by first considering vertices on which the robber can be caught in the one round. Then in each step we add a new vertex to the funnel from which the cops can force the robber to move on the existing funnel. By \cite{doi:10.1137/090780006} such vertices always exist and we show that in the framework of funnels we can guarantee the result to be monotone at each step.

\begin{restatable}{lemma}{LemmaVisLazyStratToFunnel}
    Let $G=(V,E)$ be a graph and $\vlcopwidth{\infty}(G) = k$ then there exists a monotone funnel on $V$ of width less then $k$.
\end{restatable}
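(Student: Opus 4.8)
The plan is to invert the passage from funnels to strategies: from a winning $k$\=/cop strategy we read off, one vertex at a time, the regions into which the robber is successively funnelled, organising the bookkeeping so that the resulting assignment comes out monotone.

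First I would translate the hypothesis into a purely combinatorial separation statement. By \cite{doi:10.1137/090780006} we have $\vlcopwidth{\infty}(G)=\delta_\infty(G)+1$, so $\vlcopwidth{\infty}(G)=k$ means $\delta_\infty(G)=k-1<k$, and by the hideout characterisation recalled in the preliminaries this is equivalent to $G$ containing no $(k,\infty)$\=/hideout. Spelling out the negation of ``hideout'' yields the engine of the construction: for every non\=/empty $U\subseteq V$ there is a vertex $v\in U$ and a set $S\subseteq V\setminus\{v\}$ with $|S|\le k-1$ that separates $v$ from $U\setminus\{v\}$ in $G$. Writing $C$ for the connected component of $v$ in $G\setminus S$, this says exactly that $C\cap U=\{v\}$ and $cut(C)\subseteq S$, hence $|cut(C)|\le k-1$. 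This $v$ is the ``vertex from which the cops can force the robber onto the already cleared part'' of the informal description.

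Then I would run a greedy construction producing a chain $\emptyset=A_0\subsetneq A_1\subsetneq\dots\subsetneq A_n=V$ together with a partial map $F$, maintaining the invariant that $F$ restricted to $A_i$ is a \emph{monotone} funnel of width $\le k-1$ with $F(a)\subseteq A_i$ for every $a\in A_i$. At step $i$ I apply the separation statement to $U=V\setminus A_{i-1}$, obtaining a vertex $v_i$, a separator $S$, and its component $C=C_{v_i}$; since $C\cap U=\{v_i\}$ we get $C\setminus\{v_i\}\subseteq A_{i-1}$, and I set $A_i=A_{i-1}\cup\{v_i\}$ and $F(v_i)=C$. The first step (with $U=V$) automatically supplies the base case $F(v_1)=\{v_1\}$ with $|N(v_1)|\le k-1$. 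Most axioms are then immediate: (F1) and (F3) hold because $C$ is a component containing $v_i$; the width bound holds because $cut(F(v_i))=cut(C)\subseteq S$; and (F2) holds because $F(v_i)\setminus\{v_i\}\subseteq A_{i-1}$ while every earlier region lies inside $A_{i-1}$, which does not contain the freshly added $v_i$.

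The hard part, and the genuine content of the lemma, is maintaining monotonicity (F4), i.e.\ guaranteeing $F(b)\subseteq F(v_i)$ for every $b\in C_{v_i}$. Setting $C=C_{v_i}$, a connected set $F(b)$ with $b\in C$ fails to lie inside $C$ only if it meets $cut(C)\subseteq S$; thus (F4) is equivalent to the statement that no previously built region protrudes through the boundary of the new one, which is precisely the assertion that the whole family stays laminar. This is where the tension sits: the separator $S$ must block \emph{every} $G$\=/path from $v_i$ to the uncleared set (to keep $F(v_i)\subseteq A_{i-1}\cup\{v_i\}$), yet must be compatible with the already constructed regions (to avoid slicing through one of them). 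I would resolve this by selecting, among all forceable vertices provided by \cite{doi:10.1137/090780006}, one whose innermost separator $cut(C_{v_i})$ meets each earlier region only in its boundary — equivalently, so that $C_{v_i}$ is the union of $\{v_i\}$ with a collection of \emph{complete} earlier regions — and I would derive the existence of such a choice from minimality of the separator together with the inductive hypothesis that $A_{i-1}$ is already laminar. Making this compatibility precise, and verifying that it never forces the cut above $k-1$, is the main obstacle I expect; once it is in place, after $n$ steps we have $A_n=V$ and $F$ is a monotone funnel on all of $V$ of width $\le k-1<k$, as required.
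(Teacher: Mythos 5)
Your setup coincides with the paper's: translate $\vlcopwidth{\infty}(G)=k$ into the absence of a $(k,\infty)$\-/hideout, build $A_0\subsetneq\dots\subsetneq A_n=V$ greedily, at each step pick a forceable vertex $x$ together with a separator of size $<k$ and set $F(x)$ to the component of $x$ in the complement of that separator; your verification of (F1)--(F3) and of the width bound is the same as the paper's. But the step you yourself flag as ``the main obstacle I expect'' is exactly the content of the lemma, and your proposed way around it does not go through. You plan to keep the previously built funnel fixed and to repair (F4) by choosing, among the forceable vertices, one whose new region is $\{v_i\}$ together with a union of \emph{complete} earlier regions. There is no reason such a vertex exists: every size\-/$(<k)$ separator witnessing forceability of every candidate may slice through some earlier region, and ``minimality of the separator plus laminarity of $A_{i-1}$'' does not manufacture one that does not.

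The paper resolves this differently, and the difference is essential: it does not search for a better new vertex, it repairs the \emph{funnel}, and to do so it must be allowed to modify regions already assigned on $A_j$. Concretely, at the start of each inductive step it re\-/chooses the funnel on $A_j$ to be inclusion\-/wise minimal among all monotone funnels of width $<k$, and then chooses $F_{j+1}(x)$ inclusion\-/wise maximal. If monotonicity fails at some minimal $y\in F_{j+1}(x)$, it sets $Z=F_{j+1}(y)\cap cut(F_{j+1}(x))$ and $\hat Z=cut(F_{j+1}(x)\cup F_{j+1}(y))\setminus cut(F_{j+1}(x))$ and compares sizes: if $|\hat Z|\le|Z|$, replacing $F_{j+1}(x)$ by $F_{j+1}(x)\cup F_{j+1}(y)$ keeps the width below $k$ and contradicts maximality of $F_{j+1}(x)$; if $|\hat Z|>|Z|$, replacing $F_{j+1}(y)$ by $F_{j+1}(x)\cap F_{j+1}(y)$ keeps the width below $k$, stays monotone, and contradicts minimality on $A_j$. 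This double extremal choice plus the counting of $|Z|$ against $|\hat Z|$ is the missing idea; note in particular that the second case shrinks an \emph{old} region $F(y)$ with $y\in A_j$, something your greedy scheme never contemplates.
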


We immediatly get the following.
\begin{restatable}{theorem}{TheoremVisLazyMonotonicity}
    The visible lazy variant with unbounded speed has the cop\-/monotonicity property and therefore also the robber\-/monotonicity property.
\end{restatable}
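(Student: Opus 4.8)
The plan is to obtain the theorem as an immediate corollary of the two preceding lemmas, \cref{thmt@@LemmaVisLazyFunnelToStrat} and \cref{thmt@@LemmaVisLazyStratToFunnel}, which together show that monotone funnels and visible lazy unbounded\-/speed strategies measure the same quantity up to the $+1$ offset. Concretely, I want to establish the chain
\begin{align*}
    \vlcopwidth{\infty}(G) \leq \cmvlcopwidth{\infty}(G) \leq \vlcopwidth{\infty}(G),
\end{align*}
which forces equality and hence the cop\-/monotonicity property. No new machinery is needed beyond the funnel characterization already set up.

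The left inequality is free: as observed in the preliminaries, every cop\-/monotone strategy is in particular a valid (non\-/monotone) strategy, so restricting to cop\-/monotone searches can only increase the number of cops required, giving $\vlcopwidth{\infty}(G) \leq \cmvlcopwidth{\infty}(G)$. For the right inequality I would set $k \coloneqq \vlcopwidth{\infty}(G)$ and apply \cref{thmt@@LemmaVisLazyStratToFunnel} to produce a monotone funnel on $V$ of width at most $k-1$. Feeding this funnel into \cref{thmt@@LemmaVisLazyFunnelToStrat} yields a cop\-/monotone strategy witnessing $\cmvlcopwidth{\infty}(G) \leq (k-1)+1 = k = \vlcopwidth{\infty}(G)$. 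Combining the two bounds gives $\cmvlcopwidth{\infty}(G) = \vlcopwidth{\infty}(G)$, the cop\-/monotonicity property.

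The robber\-/monotonicity property then follows at once, since cop\-/monotonicity is the stronger restriction: every cop\-/monotone search is robber\-/monotone, so robber\-/monotone strategies sit between the unrestricted and the cop\-/monotone ones, yielding
\begin{align*}
    \vlcopwidth{\infty}(G) \leq \rmvlcopwidth{\infty}(G) \leq \cmvlcopwidth{\infty}(G).
\end{align*}
As the outer two terms have just been shown equal, the middle term is squeezed to the common value.

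The honest point is that this deduction carries essentially no difficulty — the theorem is a one\-/line consequence of the funnel equivalence. All the real work has already been pushed into \cref{thmt@@LemmaVisLazyStratToFunnel}, whose proof must build a monotone funnel incrementally from an arbitrary optimal strategy, inserting one vertex at a time while maintaining $(\mathit{F1})$--$(\mathit{F4})$ and, crucially, re\-/establishing the monotonicity condition $(\mathit{F4})$ after each insertion. The existence of a vertex from which the cops can force the robber onto the already\-/constructed funnel — guaranteed by \cite{doi:10.1137/090780006} — is where the substantive argument lives; by contrast, the present theorem is merely the assembly of that lemma with the easy converse \cref{thmt@@LemmaVisLazyFunnelToStrat} and the trivial monotonicity ordering.
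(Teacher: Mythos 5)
Your proposal is correct and matches the paper exactly: the paper also derives the theorem immediately from \cref{thmt@@LemmaVisLazyStratToFunnel} (yielding a monotone funnel of width at most $k-1$) and \cref{thmt@@LemmaVisLazyFunnelToStrat} (converting it back into a cop\-/monotone strategy with $k$ cops), with the robber\-/monotone variant squeezed between the two. Your bookkeeping of the width\-/versus\-/cop\-/count offset is also right.
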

\section{Bounded Expansion}
In \cite{torunczyk2023flip}, Toruńczyk investigated the bounded speed visible active copwidth to give a new perspective on the fundamental notions of sparsity theory. A class of graphs $C$ is known to have bounded expansion iff for every $s\in \N$ the generalized $s$\=/colouring numbers are bounded. So from the bounds of these variants in the colouring numbers they concluded that the visible active copwidth variant for bounded speed can be used to characterize bounded expansion. \\
Furthermore, they adapted the results of \cite{doi:10.1137/090780006} and added the upper bound by the strong $s$\=/colouring number to infer the same for the visible lazy variant. We now take this a step further by using our results for the robber\-/monotone variants to extend Corollary A.3 from \cite{torunczyk2023flip}:
\\
\begin{restatable}{corollary}{CorollaryBoundedExpansion}
        The following conditions are equivalent for a graph class $C$:
       \begin{itemize}
        \item $C$ has bounded expansion,
        \item $\vacopwidth{s}<\infty$, for every $s\in \N$,
        \item $\vlcopwidth{s}<\infty$, for every $s\in \N$,
        \item $\rmvlcopwidth{s}<\infty$, for every $s\in \N$,
        \item $\ilcopwidth{s}<\infty$, for every $s\in \N$,
        \item $\rmilcopwidth{s}<\infty$, for every $s\in \N$.
    \end{itemize}
\end{restatable}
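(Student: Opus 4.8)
The plan is to assemble this corollary from functional equivalences that are already in place, rather than to re\-/prove any game\-/theoretic statement from scratch. Recall the classical fact that a graph class $C$ has bounded expansion if and only if, for every $s\in\N$, the generalised $s$\=/colouring numbers are bounded on $C$ \cite{nevsetvril2012sparsity}. The engine of the argument is the following elementary transfer principle: if two parameters $f$ and $g$ are functionally equivalent, so $f(G)\le\alpha(g(G))$ and $g(G)\le\beta(f(G))$ for all $G$, then $\sup_{G\in C}f(G)<\infty$ holds exactly when $\sup_{G\in C}g(G)<\infty$. Hence, for a fixed speed, any two functionally equivalent parameters are finite on $C$ simultaneously.

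First I would dispatch the four lazy conditions. By \cref{thmt@@LemmaFuncEquivCaR}, for each fixed $s$ the parameters $\ilcopwidth{s}$, $\vlcopwidth{s}$, $\rmvlcopwidth{s}$, $\rmilcopwidth{s}$, $\adm{s}+1$ and the generalised $s$\=/colouring numbers are pairwise functionally equivalent. Applying the transfer principle, each of these is finite on $C$ precisely when $\scol{s}$ is. Putting the quantifier over $s\in\N$ in front, each of the conditions ``$\ilcopwidth{s}<\infty$ for every $s$'', ``$\vlcopwidth{s}<\infty$ for every $s$'', ``$\rmvlcopwidth{s}<\infty$ for every $s$'' and ``$\rmilcopwidth{s}<\infty$ for every $s$'' is then equivalent to ``the generalised $s$\=/colouring numbers are bounded for every $s$'', i.e. to bounded expansion.

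It remains to fold in the visible active variant, which is absent from \cref{thmt@@LemmaFuncEquivCaR}. Here I would invoke the sandwich bounds recorded in \cref{figResults}, namely $\adm{s}+1\le\vacopwidth{s}\le\wcol{2s}+1$ from \cite{torunczyk2023flip} (or the alternative upper bound $\scol{4s}+1$ from \cite{hickingbotham2023cop}). Since $\wcol{s}$ is monotone in $s$, we have $\wcol{s}\le\wcol{2s}$, so $\wcol{2s}$ is bounded on $C$ for every $s$ exactly when $\wcol{s}$ is; the speed shift is therefore invisible once the quantifier over all speeds is in front. Combining this with $\adm{s}+1\le\vacopwidth{s}$ and the functional equivalence of admissibility with the colouring numbers yields that $\vacopwidth{s}<\infty$ for every $s$ is again equivalent to bounded expansion, recovering in particular the visible active equivalence from Corollary~A.3 of \cite{torunczyk2023flip}.

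The content is essentially bookkeeping, since the substantive work lives in \cref{thmt@@LemmaFuncEquivCaR} and in the colouring\-/number characterisation of bounded expansion. The main, and only mildly delicate, point I expect to spell out carefully is the transfer principle together with the handling of the speed shifts: functional equivalence is a per\-/graph statement with an unspecified bounding function, and one must check that it does promote to equivalence of the ``bounded for every $s$'' statements, and that the $2s$ and $4s$ appearing in the visible active bounds do not spoil this once speeds are universally quantified.
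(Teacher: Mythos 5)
Your proposal is correct and follows essentially the same route the paper intends: the four lazy conditions are handled via the pairwise functional equivalences of \cref{thmt@@LemmaFuncEquivCaR} together with the characterisation of bounded expansion by the generalised colouring numbers, and the visible active condition is folded in via the sandwich $\adm{s}+1\le\vacopwidth{s}\le\wcol{2s}+1$ from Corollary~A.3 of \cite{torunczyk2023flip}. Your explicit attention to the transfer principle and to the harmlessness of the $2s$/$4s$ speed shifts under the universal quantifier over $s$ is exactly the bookkeeping the paper leaves implicit.
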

Planar graphs, every class of bounded maximum degree, classes of bounded tree\-/width and classes that exclude a fixed (topological) minor have bounded expansion, and thereby the above copwidth variants are all bounded on these classes.

\section{Conclusion}
While the study of unbounded speed visible active, invisible active and invisible lazy variants of the Cops\-/and\-/Robber game is characterized by monotonicity results like \cite{lapaugh1993recontamination}, the bounded speed variants give a different picture. We have studied three areas of non\-/monotonicity for bounded speed Cops\-/and\-/Robber games. In the first, we saw that cop- and robber\-/monotonicity are hard restrictions for the invisible active variant, and cop\-/monotonicity is a hard restriction for the invisible lazy variant. By that we obtained new characterizations for path\-/width. In the second, we have shown that the visible active and visible lazy variants do not have the cop\-/monotonicity property and that the monotonicity cost is unbounded.
In the third, we discovered the functional equivalence between the lazy variants and their respective robber\-/monotone versions while disproving the exact equivalence.

All of this raises the question which graph structures are captured by these formerly unknown game variants. For example one could study a variation of tree\-/width were the underlying structure is only locally acyclic. These could then further be studied for their algorithmic applications and usefulness as approximations for tree\-/width or the generalized colouring numbers. This mimics the possible use cases for visible lazy games outlined in \cite{doi:10.1137/090780006}. As the visible lazy unbounded speed variant is already known \cite{doi:10.1137/090780006} to be in $\mathbf{P}$ it is interesting which algorithmic speedup our monotonicity result (already conjectured in \cite{doi:10.1137/090780006}) offers.
The picture of the robber\-/monotonicity\-/cost still needs to be completed. For the visible lazy variant, the polynomial bound in $k$ is most probably not strict, and it would be interesting whether there may be an additive bound in terms of $s$, so whether for some function $f$, $\rmvlcopwidth{s}(G) \leq \vlcopwidth{s}(G) + f(s)$ for every graph $G$. Furthermore, the robber\-/monotonicity of both lazy copwidth variants for $s=2,3$ and the bounded speed visible active copwidth variants are open questions.

\begin{credits}
\subsubsection{\discintname}
The authors have no competing interests to declare that are
relevant to the content of this article.
\end{credits}

\bibliographystyle{splncs04}
\bibliography{references.bib}

\newpage
\appendix

\section{Appendix}

\TheoremInvActRobbermonotonicity*
\begin{proof}
    First of all, a speed bound only puts restrictions on the robber and not the cops; therefore, $\rmiacopwidth{s}(G) \leq \rmiacopwidth{\infty}(G)$. To prove the other direction, assume we have $\rmiacopwidth{s}(G)=k$ and let $f$ be a corresponding search strategy. For contradiction, assume that there is a play with a robber $(R_t)_{t\in\N}$ of infinite speed that does not get caught in it. Lets call $Z_t = \bigcup_{t'\leq t}S_{t'}$ the cleared zone at the start of round $t$. Let $t'$ be the smallest index that satisfies $R_{t'} \in Z_{t'}$ and assume it exists. Then $R_{t'} \in Z_{t'-1}$, because the robber does not get caught and $R_{t'-1} \notin Z_{t'-1}$, so there is path in $G\setminus \{S_{t'} \cap S_{t'-1}\}$ from $R_{t'-1}$ to $R_{t'}$ containing an edge $vw$ with $v\notin Z_{t'-1}$ and $w\in Z_{t'-1}$.
    So, at either round $t'-1$ or $t'$, the vertex $w$ is not occupied by a cop but has been in an earlier round; we call this round $\tilde{t}$. $v$ is not occupied for the first $t'-1$ rounds, so a robber could, regardless of its speed, wait at $v$ until round $\tilde{t}$ and then move to $w$ without getting caught, contradicting that the search strategy $f$ is robber\-/monotone for finite speeds. So no such $t'$ can exist and as $(R_t)_{t\in\N}$ does not get caught, he must remain outside $Z_t$, so $Z_t \neq V$ for all $t$. However, a speed\-/bound robber could then just stay at a vertex $v\in V\setminus \bigcup_{t}Z_{t}$ that is never visited by the cops and remains uncaught. This contradicts that $f$ is a winning cop strategy, and we get $\rmiacopwidth{\infty}(G)=k$ because our strategy is still robber\-/monotone.
\end{proof}

\TheoremInvLazyCopmonotonicity*
\begin{proof}
    $\cmilcopwidth{s}(G) \leq \cmiacopwidth{s}(G) = \pw(G) + 1$ is trivial, so we need to show $\cmiacopwidth{s}(G) \leq \cmilcopwidth{s}(G)$. Assume $\cmilcopwidth{s}(G) = k$; then there is a cop\-/monotone strategy $f$ to search for an invisible lazy robber using $k$ cops. We claim the same strategy can also search for an invisible active robber. We define $Z_t$ as above and conclude analogously that $V=\bigcup_{t}Z_{t}$ must hold. Assume for contradiction there is a play with an active robber $(R_t)_{t\in \N}$ that does not get caught, so at some round $t'$ moves along a path in $G\setminus \{S_{t'} \cap S_{t'-1}\}$ from $R_{t'-1}\in V\setminus Z_{t'-1}$ to $R_{t'}\in Z_t'$. Again we infer $R_{t'}\in Z_{t'-1}$. Then there has to be an edge $vw$ on that path, such that $v\notin Z_{t'-1}$ and $w\in Z_{t'-1}$. We have that $w\notin S_{t'-1} \cap S_{t'}$, so by the cop\-/monotonicity of our strategy $w\notin S_{t}$ for $t\geq t'$. However, a lazy robber could start at $v$ and when a cop is moved on his position, which does not happen before round $t'$, evade to $w$. There, he remains uncaught, which contradicts $f$ being a cop\-/monotone winning strategy against an invisible lazy robber, and we get $\cmiacopwidth{s}(G)=k$ because our strategy is still cop\-/monotone.
\end{proof}

\TheoremInvActRobbermonotonicityBoundSpeed*

\begin{proof}
	To see that $\pw(\iaNonMon sgn)\geq n+a$ we observe that $K_{n+g}$ is a minor of \iaNonMon san.
	
	To prove that $\iacopwidth{s}(\iaNonMon sgn)\leq n$ we give a winning\-/strategy for $n$ cops.
	For all $b\in B$ and all $c\in C$, let $P_{b,c}\coloneqq p_{b,c}^1=b,p_{b,c}^2,\ldots,p_{b,c}^{3s}=c$ be the path connecting $b$ and $c$ such that no internal vertex of the path intersects $A\cup B\cup C$.
	We start by placing cops on all vertices in $B$ and $C$.
	As $n\geq 2g+2$ there are at least 2 cops that are not yet positioned on the graph.
	These two cops clear all $P_{b,c}$, with $b\in B$ and $c\in C$.
	Now the robber can only be in a vertex of $A$.
	Next the cops stay on $C$ and move to all vertices in $A$.
	Then all $n$ cops are positioned on the graph and robber can be in any vertex $p_{b,c}^{\ell}$, where $\ell\leq s<3s$.
	In the next round the cops stay on $A$ and move to all vertices in $B$.
	Again all cops are positioned on the graph.
	The robber can be in any vertex $p_{b,c}^{\ell}$, where $1<\ell\leq 2s<3s$.
	Then the cops stay on $B$ and move to all vertices in $C$.
	The robber could move to vertices $p_{b,c}^{3s}\in C$ in this round, but would immediately be caught, thus robber can be in any vertex $p_{b,c}^{\ell}$, where $1<\ell<3s$.
	As again $|B|+|C|\leq n-2$ there are two cops that are ot positioned on the graph that can clear all these paths.
	Eventually the robber is caught.
\end{proof}

\TheoremVisActCopmonotonicity*

\begin{proof}
    First, consider the case $s=1$ and let $G_n$ be the $(n-1)$\=/ary complete tree of height $n-1$ with the addition of paths of length two from each vertex to each of its predecessors. We call the vertices of the tree "core\-/vertices" and the additional "back\-/vertices". We can search $G_n$ for a visible active robber with the following strategy: If the robber is on a back\-/vertex, we place cops on the vertex's neighbourhood and catch him, as it has a degree of two. If the robber is on a core\-/vertex, we place one cop on this vertex and one on its parent (if it exists). Then, the robber is forced to step down in the tree or leave it. If necessary, we repeat the procedure, and as he can not step down forever, he eventually leaves the tree and gets caught. To do this in a cop\-/monotone way, one would need $n$ cops who, after getting placed, do not leave their positions. \\
    On the other hand, consider $n-1$ cops that play cop\-/monotonously. The robber can start at the root vertex and step down only if a cop is placed in his position. As long as he is not at a leaf node, there is always a child node that cops do not occupy. If he reaches a leaf and a cop is placed on his position, then at least one of the $n-1$ predecessors and its corresponding back\-/vertex is not guarded by a cop (but the predecessor was prior), so he can choose this 2\=/path to go to the predecessor in two rounds and be safe. By cop\-/monotonicity, the cops cannot move back on their predecessors.
    We can adapt this for greater $s$ by replacing every edge with a fresh path of length $s$, calling the additional vertices "path\-/vertices". We always refer to the core\-/vertex and the underlying tree structure with terms like leaf-, parent-, or child\-/vertex. The robber's strategies can stay identical, as the cops have no advantage from blocking a path between two original vertices instead of blocking the vertices themselves.
    We give a non\-/cop\-/monotone strategy for three cops that distinguishes the scenarios "A" to "F" that are also depicted in \cref{fig:CopStragyTree}. In each scenario, additional vertices may already be guarded. 
       \begin{itemize}
           \item \textbf{Scenario "A"} \\
           \textit{Description:} $v$ is a core\-/vertex and the robber is either on $v$ or on a path\-/vertex between $v$ and a child of $v$.\\
           \textit{Cop Placement:} Cops are placed on $v$, its parent (if it exists) and if the robber is between $v$ and a child of $v$, on that child.\\
           \textit{Next Possible Scenarios:} The robber either gets trapped between two core\-/vertices (Scenario "F"), moves towards a back\-/vertex (Scenario "D"), or moves along the tree (Scenario "C", "E").\\
   \item \textbf{Scenario "B"} \\
           \textit{Description:} $v_1$ and $v_2$ are core\-/vertices. $v_2$ is a predecessor of $v_1$, so they are connected via a back\-/vertex and the corresponding path\-/vertices, and the robber is somewhere on that path of length $2s$.\\
           \textit{Cop Placement:} Cops are placed on $v_1$ and $v_2$. \\
           \textit{Next Possible Scenarios:} The robber either stays outside the tree (Scenario "D") or moves inside (Scenario "A"/"C"). \\
           \item \textbf{Scenario "C"} \\
           \textit{Description:} $v$ is a core\-/vertex, and the robber is either on a child of $v$ or on a path\-/vertex between $v$ and a child of $v$. A cop already guards $v$.\\
           \textit{Cop Placement:} The cop stays on $v$, and another is placed on the child.\\
           \textit{Next Possible Scenarios:} The robber either gets trapped between two core\-/vertices (Scenario "F"), moves towards a back\-/vertex (Scenario "D"), or descendents down the tree (Scenario "C", "E"). Note that Scenario "C" may only repeat finitely many times.\\
   \item \textbf{Scenario "D"} \\
           \textit{Description:} $v_1$ and $v_2$ are core\-/vertices connected via a back\-/vertex $v_b$ and the corresponding path\-/vertices. A cop already guards $v_2$, and the robber is either on the $v_b$ or a path\-/vertex between $v_2$ and the $v_b$. \\
           \textit{Cop Placement:} The cop stays on $v_2$ and additional ones are placed on $v_1$ and $v_b$. \\
           \textit{Next Possible Scenarios:} The robber gets trapped between the $v_b$ and either $v_1$ or $v_2$ (Scenario "F").\\
           \item \textbf{Scenario "E"} \\
           \textit{Description:} The robber is on a core\-/vertex $v$ that is a leaf. The parent $v_p$ is guarded by a cop.\\
           \textit{Cop Placement:} The cop stays on the parent $v_p$, and another one is placed on $v$. \\
           \textit{Next Possible Scenarios:} The robber either gets trapped between $v$ and its parent $v_p$ (Scenario "F") or moves towards a back\-/vertex (Scenario "D") \\
           \item \textbf{Scenario "F"} \\
           \textit{Description:} $v_1$ and $v_2$ are vertices that cops already guard, and the robber is on a path\-/vertex between them\\
           \textit{Arrest:} The cop on $v_2$ stays put, while the other two alternately move towards the robber and catch him. 
       \end{itemize}
   Any play with a visible active robber has to start in either Scenario "A" or "B" and, after finitely many steps, end in Scenario "F", or the robber is caught earlier. All in all, this strategy is winning for three non\-/monotone cops. It can be adapted by $n$ cop\-/monotone cops that do not leave a vertex as long as the robber is at a predecessor or descendant.
\end{proof}

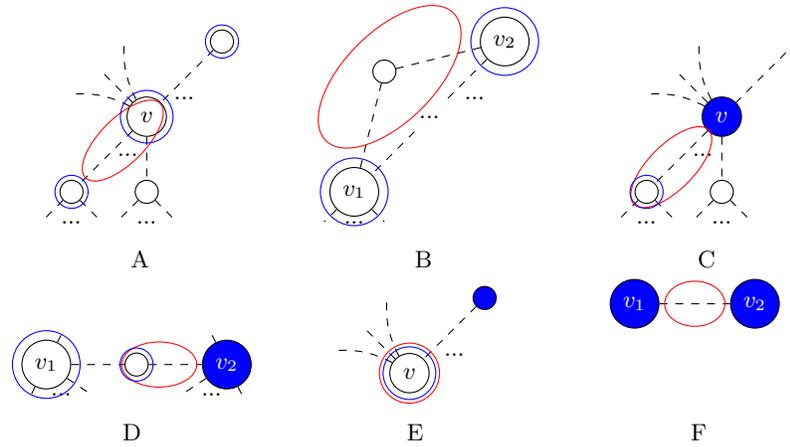
\begin{figure}
    \centering
    \begin{subfigure}{.3\textwidth}
        \centering
        \begin{tikzpicture}
            \node[draw, circle] (a) at (1,1) {};
            \node[draw, circle] (b) at (0,0) {$v$};
            \node[draw, circle] (c) at (-1,-1) {};
            \node[draw, circle] (d) at (0,-1) {};
            \node at (-0.25,-0.5) {...};
            \node at (-1,-1.4) {...};
            \node at (0,-1.4) {...};
            \node at (0.5,0.25) {...};
    
            \draw[dashed] (a) -- (b); 
            \draw[dashed] (b) -- (c); 
            \draw[dashed] (b) -- (d); 
            \draw[dashed] (b) -- (-0.6,0.6); 
            \draw[dashed, bend left=15] (b) to (-0.3,1);
            \draw[dashed, bend right=15] (b) to (-1,0.3);
            \draw[dashed] (c) to (-1.4,-1.4);
            \draw[dashed] (c) to (-0.6,-1.4);
            \draw[dashed] (d) to (-0.4,-1.4);
            \draw[dashed] (d) to (0.4,-1.4);
    
            \draw[rotate=45, red] (-0.45, 0) ellipse (0.7 and 0.3); 
            \draw[blue] (a) circle (0.22);
            \draw[blue] (b) circle (.35);
            \draw[blue] (c) circle (.22);
    
        \end{tikzpicture}
        \caption*{A}
    \end{subfigure}
    \begin{subfigure}{.3\textwidth}
        \centering
        \begin{tikzpicture}
            \node[draw, circle] (a) at (1,1) {$v_2$};
            \node[draw, circle] (b) at (-0.6,0.6) {};
            \node[draw, circle] (c) at (-1,-1) {$v_1$};
    
            \node at (0.6,0.25) {...};
            \node at (0,0) {...};
            \node at (-1,-1.4) {...};

            \draw[dashed] (a) -- (b); 
            \draw[dashed] (b) -- (c); 
            \draw[dashed] (a) -- (0.2,0.2); 
            \draw[dashed] (-0.2,-0.2) -- (c);

            \draw[dashed] (c) to (-1.4,-1.4);
            \draw[dashed] (c) to (-0.6,-1.4);
            
            \draw[rotate=45, red] (0, 0.75) ellipse (1.2 and 0.6); 
            \draw[blue] (a) circle (0.45);
            \draw[blue] (c) circle (.45);
            
        \end{tikzpicture}
        \caption*{B}
        \end{subfigure}
    \begin{subfigure}{.3\textwidth}
        \centering
        \begin{tikzpicture}
            \node[draw, circle, fill=blue] (b) at (0,0) {\textcolor{white}{$v$}};
            \node[draw, circle] (c) at (-1,-1) {};
            \node[draw, circle] (d) at (0,-1) {};
            \node at (-0.25,-0.5) {...};
            \node at (-1,-1.4) {...};
            \node at (0,-1.4) {...};
    
            \draw[dashed] (1,1) -- (b); 
            \draw[dashed] (b) -- (c); 
            \draw[dashed] (b) -- (d); 
            \draw[dashed] (b) -- (-0.6,0.6); 
            \draw[dashed, bend left=15] (b) to (-0.3,1);
            \draw[dashed, bend right=15] (b) to (-1,0.3);
            \draw[dashed] (c) to (-1.4,-1.4);
            \draw[dashed] (c) to (-0.6,-1.4);
            \draw[dashed] (d) to (-0.4,-1.4);
            \draw[dashed] (d) to (0.4,-1.4);
    
            \draw[rotate=45, red] (-0.95, 0) ellipse (0.7 and 0.3); 
            \draw[blue] (c) circle (.22);
    
        \end{tikzpicture}
        \caption*{C}
    \end{subfigure}
    \begin{subfigure}{.3\textwidth}
        \centering
        \begin{tikzpicture}
            \node[draw, circle] (a) at (-1.2,0) {$v_1$};			
            \node[draw, circle] (b) at (0,0) {};			
            \node[draw, circle, fill=blue] (c) at (1.2,0)  {\textcolor{white}{$v_2$}};
            
            \node at (-1,-0.4) {...};
            \node at (1,-0.4) {...};
    
            \draw[dashed] (a) -- (b);
            \draw[dashed] (b) -- (c);
            \draw[dashed] (a) -- (-1.4,-0.4);
            \draw[dashed] (a) -- (-0.6,-0.4);
            \draw[dashed] (a) -- (-1,0.4);
            
            \draw[dashed] (c) -- (1.4,-0.4);
            \draw[dashed] (c) -- (0.6,-0.4);
            \draw[dashed] (c) -- (1,0.4);
    
            \draw[red] (0.3, 0) ellipse (0.5 and 0.3); 
            \draw[blue] (a) circle (.45);
            \draw[blue] (b) circle (.22);
    
        \end{tikzpicture}
        \caption*{D}
    \end{subfigure}
    \begin{subfigure}{.3\textwidth}
        \centering
        \begin{tikzpicture}
            \node[draw, circle, fill=blue] (a) at (1,1) {};
            \node[draw, circle] (b) at (0,0) {$v$};
            
            \node at (0.6,0.25) {...};

            \draw[dashed] (a) -- (b); 
            \draw[dashed] (b) -- (-0.6,0.6); 
            \draw[dashed, bend left=15] (b) to (-0.3,1);
            \draw[dashed, bend right=15] (b) to (-1,0.3);
            
            \draw[blue] (b) circle (.35);
            \draw[red] (b) circle (.4);
        \end{tikzpicture}        
        \caption*{E}
    \end{subfigure}
    \begin{subfigure}{.3\textwidth}
        \centering
        \begin{tikzpicture}
            \node[draw, circle, fill=blue] (a) at (-0.8,0) {\textcolor{white}{$v_1$}};
            \node[draw, circle, fill=blue] (b) at (0.8,0) {\textcolor{white}{$v_2$}};
        
            \draw[dashed] (a) -- (b);
            \draw[red] (0, 0) ellipse (0.4 and 0.3); 
        \end{tikzpicture}
        \vspace{1cm}
        \caption*{F}
    \end{subfigure}
    \begin{subfigure}{.3\textwidth}
    \end{subfigure}
    \begin{subfigure}{.3\textwidth}
    \end{subfigure}

    \caption{The red region shows the hiding spot of the robber. The blue circles are the cop-positions after the round. The blue verteces are places where the cops were placed in the last round and stay. Each dashed line represents a path of length $s$.}
    \label{fig:CopStragyTree}
\end{figure}

\TheoremVisLazyCopmonotonicity*
\begin{proof}
    For $s=1$, we can again use the graphs $G_n$ from the last proof, as all mentioned robber strategies already are lazy. For speeds $s>1$, we remove the back\-/vertices and instead add fresh paths of length $s+1$ to connect each core\-/vertex to each of its predecessors. The robber strategy against $n-1$ cop\-/monotone cops stays the same: He starts at the root and, first, lazily descendents down the tree, taking steps of size one. Second, when he reaches a leaf and a cop is moved on his position, there is an unguarded path of length $s+1$ towards one of its predecessors. He goes into that path the maximum distance of $s$ units and is safe at the predecessor in the next round. \\
    The cop strategy is similar as well: The robber is forced down the tree by placing a cop on the parent vertex and, in the next round, placing one on the robber's position. When the robber leaves the tree, he is at a vertex of degree two and can be caught by moving in the neighbourhood of his position and, in the next round, on the vertex itself. This works with three non\-/cop\-/monotone cops or $n$ cop\-/monotone ones.
\end{proof}

\TheoremLazyRecontamination*
\begin{proof}
    We prove that the graph $G$ in \cref*{fig:LazyRecontamination} has the claimed properties of $\vlcopwidth{4}(G)\leq\ilcopwidth{4}(G)\leq10$ and $\rmilcopwidth{4}(G)\geq\rmvlcopwidth{4}(G)>10$, which implies the theorem. $G$ contains two 8\=/cliques $A$ and $B$ and a 2\=/clique $C$. The other circles represent single vertices, while the vertex $v$ plays the central role outlined in the main text.
    Edges between one of the cliques and another vertex are meant as multiple edges, connecting each vertex of the clique with that vertex. 
    For both directions it suffices to prove the stronger case.
    \subsubsection{$\rmilcopwidth{4}(G)\geq\rmvlcopwidth{4}(G)>10$}
    A visible lazy robber with speed four can force ten cops to allow recontamination. He does that by staying in $A$ until it gets completely occupied by cops. Then, if $v$ is not guarded, he can move to $B$, since the remaining two cops could not block all connecting paths of length $\leq 4$ and repeat symmetrically. Otherwise, he can escape to $C$, as there are four paths of length four connecting $C$ to $A$ and $B$ and only one cop left. \\
Next either the robber moves to $A$ or $B$ and we are were we started or the cops need to position on $C$ and all eight paths of length $\leq 4$ that connect $C$ to either $A$ or $B$. However, when they do that no cops are left and he can move on $v$ breaking the robber\-/monotonicity.

    \subsubsection{$\vlcopwidth{4}(G)\leq\ilcopwidth{4}(G)\leq10$}
    We present a strategy for ten cops on the mentioned graph, by showing the positions of the cops (blue) and remaining hiding spots of the invisible robber (red) at each round $t$.
    \begin{longtable}{cc}
        \farbigergraph{red!60,red!60,red!60,red!60,red!60,red!60,red!60,red!60,red!60,red!60,red!60,red!60,red!60,red!60,red!60,red!60,red!60,red!60,red!60,red!60,red!60,red!60,red!60,red!60,red!60,red!60,red!60,red!60,red!60,red!60,red!60,red!60,red!60,red!60,red!60,red!60,red!60,red!60} &  \farbigergraph{blue,red!60,red!60,red!60,red!60,red!60,blue,red!60,red!60,red!60,red!60,red!60,red!60,red!60,red!60,red!60,red!60,red!60,red!60,red!60,red!60,red!60,red!60,red!60,red!60,red!60,red!60,red!60,red!60,red!60,red!60,red!60,red!60,red!60,red!60,red!60,blue} \\
        t=0 & t=1 \\
        \farbigergraph{blue,red!60,red!60,red!60,red!60,red!60,white,blue,red!60,red!60,red!60,red!60,red!60,red!60,red!60,red!60,red!60,red!60,red!60,red!60,red!60,red!60,red!60,red!60,red!60,red!60,red!60,red!60,red!60,red!60,red!60,red!60,red!60,red!60,red!60,red!60,blue} &  \farbigergraph{blue,red!60,red!60,red!60,red!60,red!60,white,white,blue,red!60,red!60,red!60,red!60,red!60,red!60,red!60,red!60,red!60,red!60,red!60,red!60,red!60,red!60,red!60,red!60,red!60,red!60,red!60,red!60,red!60,red!60,red!60,red!60,red!60,red!60,red!60,blue} \\
        t=2 & t=3 \\
        \farbigergraph{blue,red!60,red!60,red!60,red!60,red!60,white,white,white,red!60,red!60,red!60,blue,red!60,red!60,blue,red!60,red!60,red!60,red!60,red!60,red!60,red!60,red!60,red!60,red!60,red!60,red!60,red!60,red!60,red!60,red!60,red!60,red!60,red!60,red!60,white} &  \farbigergraph{blue,red!60,red!60,red!60,red!60,red!60,white,white,white,red!60,red!60,red!60,white,red!60,red!60,white,red!60,red!60,blue,red!60,red!60,blue,red!60,red!60,red!60,red!60,red!60,red!60,red!60,red!60,red!60,red!60,red!60,red!60,red!60,red!60,white} \\
        t=4 & t=5 \\
        \farbigergraph{white,blue,red!60,blue,red!60,red!60,white,white,white,red!60,red!60,red!60,white,red!60,red!60,white,red!60,red!60,white,red!60,red!60,white,red!60,red!60,red!60,red!60,red!60,red!60,red!60,red!60,red!60,red!60,red!60,red!60,red!60,red!60,blue} &  \farbigergraph{white,blue,red!60,white,blue,red!60,white,white,white,red!60,red!60,red!60,white,red!60,red!60,white,red!60,red!60,white,red!60,red!60,white,red!60,red!60,red!60,red!60,red!60,red!60,red!60,red!60,red!60,red!60,red!60,red!60,red!60,red!60,blue} \\
        t=6 & t=7 \\
        \farbigergraph{white,blue,red!60,white,white,blue,white,white,white,red!60,red!60,red!60,white,red!60,red!60,white,red!60,red!60,white,red!60,red!60,white,red!60,red!60,red!60,red!60,red!60,red!60,red!60,red!60,red!60,red!60,red!60,red!60,red!60,red!60,blue} &  \farbigergraph{white,blue,red!60,white,white,white,white,white,white,red!60,red!60,red!60,white,red!60,red!60,white,red!60,red!60,white,red!60,red!60,white,red!60,red!60,blue,red!60,red!60,blue,red!60,red!60,red!60,red!60,red!60,red!60,red!60,red!60,white} \\
        t=8 & t=9 \\
        \farbigergraph{white,blue,red!60,white,white,white,white,white,white,red!60,red!60,red!60,white,red!60,red!60,white,red!60,red!60,white,red!60,red!60,white,red!60,red!60,white,red!60,red!60,white,red!60,red!60,blue,red!60,red!60,blue,red!60,red!60,white} &  \farbigergraph{white,white,red!60,white,white,white,white,white,white,red!60,red!60,red!60,blue,red!60,red!60,blue,red!60,red!60,blue,red!60,red!60,blue,red!60,red!60,blue,red!60,red!60,blue,red!60,red!60,blue,red!60,red!60,blue,red!60,red!60,white} \\
        t=10 & t=11 \\
        \farbigergraph{white,white,red!60,white,white,white,white,white,white,red!60,red!60,red!60,blue,red!60,red!60,blue,red!60,red!60,blue,red!60,red!60,blue,red!60,red!60,blue,red!60,red!60,blue,red!60,red!60,blue,blue,red!60,blue,blue,red!60,white} &  \farbigergraph{white,white,red!60,white,white,white,white,white,white,red!60,red!60,red!60,blue,red!60,red!60,blue,red!60,red!60,blue,red!60,red!60,blue,red!60,red!60,blue,blue,red!60,blue,blue,red!60,white,blue,red!60,white,blue,red!60,white} \\
        t=12 & t=13 \\
        \farbigergraph{white,white,red!60,white,white,white,white,white,white,red!60,red!60,red!60,blue,red!60,red!60,blue,red!60,red!60,blue,red!60,red!60,blue,red!60,red!60,white,blue,red!60,white,blue,red!60,white,blue,blue,white,blue,blue,white} &  \farbigergraph{white,white,red!60,white,white,white,white,white,white,red!60,red!60,red!60,blue,red!60,red!60,blue,red!60,red!60,blue,red!60,red!60,blue,red!60,red!60,white,blue,blue,white,blue,blue,white,white,blue,white,white,blue,white} \\
        t=14 & t=15 \\
        \farbigergraph{white,white,red!60,white,white,white,white,white,white,red!60,red!60,red!60,blue,blue,red!60,blue,blue,red!60,blue,red!60,red!60,blue,red!60,red!60,white,white,blue,white,white,blue,white,white,blue,white,white,blue,white} &  \farbigergraph{white,white,red!60,white,white,white,white,white,white,red!60,red!60,red!60,white,blue,red!60,white,blue,red!60,blue,blue,red!60,blue,blue,red!60,white,white,blue,white,white,blue,white,white,blue,white,white,blue,white} \\
        t=16 & t=17 \\
        \farbigergraph{white,white,red!60,white,white,white,white,white,white,red!60,red!60,red!60,white,blue,blue,white,blue,blue,white,blue,red!60,white,blue,red!60,white,white,blue,white,white,blue,white,white,blue,white,white,blue,white} &  \farbigergraph{white,white,red!60,white,white,white,white,white,white,red!60,red!60,red!60,white,white,blue,white,white,blue,white,blue,blue,white,blue,blue,white,white,blue,white,white,blue,white,white,blue,white,white,blue,white} \\
        t=18 & t=19 \\
        \farbigergraph{white,white,blue,white,white,white,white,white,white,red!60,red!60,red!60,white,white,blue,white,white,blue,white,white,blue,white,white,blue,white,white,blue,white,white,blue,white,white,blue,white,white,blue,red!60} &  \farbigergraph{white,white,blue,blue,blue,blue,blue,blue,blue,red!60,red!60,red!60,white,white,white,white,white,white,white,white,white,white,white,white,white,white,white,white,white,white,white,white,white,white,white,white,red!60} \\
        t=20 $\rightarrow$ Recontamination! & t=21 \\
        \farbigergraph{white,white,blue,blue,blue,blue,blue,blue,blue,red!60,red!60,red!60,white,white,white,white,white,white,white,white,white,white,white,white,white,white,white,white,white,white,white,white,white,white,white,white,blue} &  \farbigergraph{white,white,blue,white,white,white,white,white,white,blue,blue,blue,white,white,white,white,white,white,white,white,white,white,white,white,white,white,white,white,white,white,white,white,white,white,white,white,blue} \\
        t=22 & t=23 $\rightarrow$ Cleared! \\
    
    \end{longtable}
    For higher speeds $s\geq4$, one can adjust $G$ by replacing the paths of length four between $C$ and $A$, $C$ and $B$, and $C$ and $v$ by paths of length $s$. The strategies for the cops or, respectively, the robber trivially generalise.
\end{proof}

\LemmaVisLazyFunnelToStrat*
\begin{proof}
    We give a strategy for $k+1$ cops: If the robber is on vertex $v$ then move cops on $cut(F(v))$ or if they are already positioned there move an extra cop on $v$.
    Formally, choose $$S_{i+1} = \begin{cases}
        cut(F(R_i)) \cup \{R_i\},  & \text{if } S_i = cut(F(R_i)) \\
        cut(F(R_i)),  &  \text{otherwise.}
    \end{cases}$$
    The robber will be forced to move at least every second round and by definition of the cut, every path form $F(R_i)$ to $V\setminus F(R_i)$ will be blocked by a cop in this round. It follows that $F(R_0)\supsetneq F(R_2) \supsetneq ... F(R_{2k})$ for every $k$ until $F(R_{2k}) = \{R_{2k}\}$ and the robber gets caught.
    This strategy uses $k+1$ cops. Now for the cop monotonicity we want to prove for $i<j<k$: $$S_i\cap S_k \subseteq S_j.$$
    For $v\in S_i\cap S_k$ we have that
    \begin{align*}
        v\in S_i &\implies v=R_{i-1} \vee v\in cut(F(R_{i-1})) \\
        &\implies v\notin F(R_i) \ \ \ (*) \\
        &\implies v\notin F(R_{j-1})
    \end{align*}
    we also have 
    \begin{align*}
        v\in S_k &\implies v=R_{k-1} \vee v\in cut(F_{k-1})\\
        &\overset{by (*)}{\implies} v\in cut(F(R_{k-1})) \\
        &\implies \exists u\in F(R_{k-1}): \ (v,u)\in E \\ 
        &\implies \exists u\in F(R_{j-1}): \ (v,u)\in E \\ 
    \end{align*}
    Put together we have $v\in cut(F_{j-1}) \subseteq S_j$, so the strategy induced by our monotone funnel is cop\-/monotone.
\end{proof}

\LemmaVisLazyStratToFunnel*
\begin{proof}
    We use induction over $j \leq |V|$ to prove our claim that there exists an $A_j\subseteq V$ with $|A_j|=j$ and a monotone funnel $F_j$ on $A_j$ of width less then $k$. The theorem then follows from the case $j=|V|$. Note that our proof is not constructive, even though an optimal funnel can indeed be constructed efficiently. \\
    \underline{Basecase:} For $j=0$ use $A_0=\emptyset$ and $F_j$ the empty function.\\
    \underline{Inudctive Case:} Assume the inductive hypthesis, so the existence of proper $A_j$ and $F_j$ for some $j<|V|$. Of the possible monotone funnels on $A_j$, like $F_j$, choose $F$ to be minimal in the sense, that for every other monotone funnel $F'$ on $A_j$ of width less then $k$ and $F'(a)\subseteq F(a)$ for all $a\in A_j$ we have $F=F'$. We want to work with this minimal $F$ because for the corresponding cop\-/strategy it means, that the options for the robber to move should be inclusionwise minimal. \\
    Now since $V\setminus A_j$ is not a $(k,\infty)$\=/hideout (see appendix or \cite{doi:10.1137/090780006} for definition) there is a $x\in V \setminus A_j$ and a $C\subseteq V \setminus \{x\}$ with $|C| < k$ such that $G\setminus C$ does not contain a path from $x$ to $V\setminus A$. So with this $C$ there is an extension of $F$ to the domain $A_{j+1} = A \cup \{x\}$ by setting $F(x)$ to the connected component of $x$ in $G\setminus C$ and this is a funnel of width less then $k$. For $x$ $\mathit{(F1,F3)}$ are trivial and $\mathit{(F2)}$ follows from the fact that $x\notin A_j$.
    For reasons that become clear later we choose from the possibilities of such extension of $F$ (we showed above that there is at least one) the one funnel $F_{j+1}$ that is inclusionwise maximal in regards to $F(x)$. \\
    Lets take a step back and look at what we have: A funnel $F_{j+1}$ on $A_{j+1}$ such that $F_{j+1}(x)$ is maximal, $F_{j+1}(a)$ is minimal for $a\in A_j$ and $F_{j+1}$ is already monotone on $A_j$. We claim that it is monotone on whole $A_{j+1}$:\\
    For contradiction assume that this is not the case, then there is a $y\in F_{j+1}(x)$ such that $F_{j+1}(y)\not\subseteq F_{j+1}(x)$. Let $y$ be minimal in regards to $F_{j+1}$ with that property, meaning that for $y'\in F_{j+1}(x)\cap F_{j+1}(y)$ $F_{j+1}(y')\subseteq F_{j+1}(x)$. Then let $Z$ denote the area of non\-/monotonicity $F_{j+1}(y)\cap cut(F_{j+1}(x))$ which is not empty. Corresponding to our strategy we could say that from $x$ the cops allow the robber to move to $y$ and guard the vertices $Z$ that are next to $y$, while from $y$ they then allow him the freedom to move on $Z$ but therefore have to guard some other region $\hat{Z}$. We denote this area $\hat{Z} := cut(F_{j+1}(x) \cup F_{j+1}(y)) \setminus cut(F_{j+1}(x))$.\\ 
    To get a better picture these relations are depicted in \cref{fig:VisLazyDipiction}. Our definitions of $Z$ and $\hat{Z}$ are fruitful, because they satisfy the following equations (calculations can be found further down):
    \begin{equation}
        cut(F_{j+1}(x) \cup F_{j+1}(y))=(cut(F_{j+1}(x))\setminus Z) \cup \hat{Z}\\
        \label{Calc1}
    \end{equation}
    \begin{equation}
        cut(F_{j+1}(x) \cap F_{j+1}(y))\subseteq (cut(F_{j+1}(y))\setminus \hat{Z}) \cup Z
        \label{Calc2}
    \end{equation}
    \begin{equation}
        \hat{Z} \subseteq cut(F_{j+1}(y))
        \label{Calc3}
    \end{equation}
    
    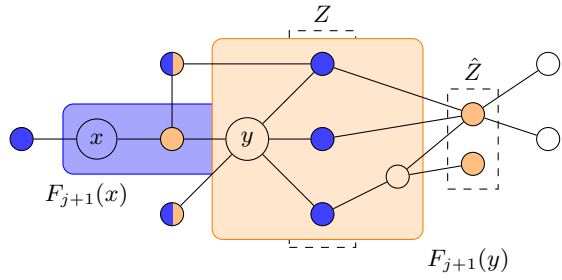
\begin{figure}
    \centering
    \begin{tikzpicture}
        \pgfdeclarelayer{background}
        \pgfdeclarelayer{main}
        \pgfsetlayers{background,main}

        \node[draw, circle] (x) at (0,0) {$x$};
        \node[draw, circle, fill=blue!75] (A) at (-1,0) {};
        \node[draw, circle, shading=splitshading] (B) at (1,1) {};
        \node[draw, circle, fill=orange!50] (M) at (1,0) {};
        \node[draw, circle] (Y) at (2,0) {$y$};
        \node[draw, circle, shading=splitshading] (C) at (1,-1) {};
        \node[draw, circle, fill=blue!75] (Z1) at (3,1) {};
        \node[draw, circle, fill=blue!75] (Z2) at (3,0) {};
        \node[draw, circle, fill=blue!75] (Z3) at (3,-1) {};
        \node[draw, circle] (F) at (4,-0.5) {};
        \node[draw, circle, fill=orange!50] (ZH1) at (5,0.33) {};
        \node[draw, circle, fill=orange!50] (ZH2) at (5,-0.33) {};
        \node[draw, circle] (D) at (6,1) {};
        \node[draw, circle] (E) at (6,0) {};

        \draw (A) -- (x)
                (x) -- (M)
                (B) -- (M)
                (B) -- (Z1)
                (M) -- (Y)
                (Y) -- (C)
                (Z1) -- (Y)
                (Z2) -- (Y)
                (Z3) -- (Y)
                (Z3) -- (F)
                (Z1) -- (ZH1)
                (Z2) -- (ZH1)
                (F) -- (ZH1)
                (F) -- (ZH2)
                (ZH1) -- (D)
                (ZH1) -- (E);
    \begin{pgfonlayer}{background}
        \node[draw, dashed, fit=(ZH1) (ZH2), inner sep=5pt, label=above:{\(\hat{Z}\)}] {};

        \node[draw, dashed, fit=(Z1) (Z2) (Z3), inner sep=8pt, label=above:{\(Z\)}] {};

        \node[draw=blue,rounded corners, fill=blue!35, fill opacity=0.5, fit=(x) (M) (Y), inner sep=5pt, label=below left:{\(F_{j+1}(x)\)}] {};

        \node[draw=orange,rounded corners, fill=orange!20, fill opacity=0.5, fit=(Y) (Z1) (Z2) (Z3) (F), inner sep=5pt,label=below right:{\(F_{j+1}(y)\)}] {};
    \end{pgfonlayer}
    \end{tikzpicture}

    \caption{Depiction of the different elements in the proof of \cref{thmt@@LemmaVisLazyStratToFunnel}. The dashed boxes represent the sets \(\hat{Z}\) and \(Z\), the light blue box represents the set \(F_{j+1}(x)\), and the orange box represents the set \(F_{j+1}(y)\). The blue and orange colored vertices are the vertices in the sets \(cut(F_{j+1}(x))\) and \(cut(F_{j+1}(y))\), respectively.}
    \label{fig:VisLazyDipiction}
\end{figure}

    With all of this build up we can make a case distinction on the sizes of $Z$ and $\hat{Z}$ and derive contradictions in both cases.\\
    \subsubsection{Case $|\hat{Z}| \leq |Z|$:}
    We have that $$\hat{F}: A_{j+1} \rightarrow \mathcal{P}ot(A_{j+1}), v \mapsto \begin{cases}
        F_{j+1}(v) & \text{if } v \in A_j\\
        F_{j+1}(x) \cup F_{j+1}(y) & \text{if } v=x
    \end{cases}$$
    is also a funnel. Its width is less then $k$ since 
    
    \begin{align*}
        |cut(\hat{F}(x))| &\overset{\cref*{Calc1}}{=} |\left(cut(F_{j+1}(x))\setminus Z \right) \cup \hat{Z}|\\
        &= |cut(F_{j+1}(x))| - |Z| + |\hat{Z}|\\
        &\leq |cut(F_{j+1}(x))|\\
        &< k,
    \end{align*}
    
    but $\hat{F}(x)\supsetneq F_{j+1}(x)$ which contradicts that $F_{j+1}(x)$ was maximal.\\

    \subsubsection{Case $|\hat{Z}| > |Z|$:}
    We have that $$\hat{F}: A \rightarrow \mathcal{P}ot(A), a \mapsto \begin{cases}
        F_{j+1}(a) & \text{if } a \neq y\\
        F_{j+1}(x) \cap F_{j}(y) & \text{if } a=y
    \end{cases}$$
    is also a target funnel. Its width is less then $k$ since 

    \begin{align*}
        |cut(\hat{F}(y))| &\overset{\cref*{Calc2}}{\leq} |\left(cut(F_{j+1}(y))\setminus \hat{Z}\right) \cup Z|\\
        &\leq |cut(F_{j+1}(y)) \setminus \hat{Z}| + |Z|\\
        &\overset{\cref*{Calc3}}{=} |cut(F_{j+1}(y))| - |\hat{Z}| + |Z|\\
        &< |cut(F_{j+1}(y))|\\
        &< k
    \end{align*}

    and it is moreover monotone since 
    $$\forall a \in A \setminus \{y\} \forall b \in \hat{F}(a): \ \hat{F}(b) \subseteq F_{j+1}(b) \subseteq F_{j+1}(a) = \hat{F}(a)$$
    
    and for all $b\in \hat{F}(y)=F_{j+1}(x)\cap F_{j+1}(y)$ the minimality of $y$ implies $\hat{F}(b) \subseteq F_{j+1}(x)$ and the monotonicity of $F$ implies $\hat{F}(b) \subseteq F_{j+1}(y)$. Thus $\hat{F}(b) \subseteq F_{j+1}(x)\cap F_{j+1}(y)=\hat{F}(y)$. But $\hat{F}(y)\subsetneq F_{j+1}(y)$ which contradicts that $F_{j+1}$ and therefore $F$ was minimal on $A_j$. \\

    So overall $F_{j+1}$ has to be monotone on $A_{j+1}$ and we have proven the inductive step. The only thing left are the three equations from earlier:\\
\underline{Claim $cut(F_{j+1}(x) \cup F_{j+1}(y))=(cut(F_{j+1}(x))\setminus Z) \cup \hat{Z}$:}\\
\begin{align*}
    \left(cut(F_{j+1}(x))\setminus Z\right) \cup \hat{Z} =&\ \left(cut(F_{j+1}(x))\setminus F_{j+1}(y)\right) \cup \hat{Z} \\
    =&\ \left\{v \in  V\setminus \left( F_{j+1}(x)\cup F_{j+1}(y)\right) \ | \ \exists t \in F_{j+1}(x): \ (t,v)\in E  \right\} \cup \hat{Z} \\
    =&\ \left(cut\left(F_{j+1}(x) \cup F_{j+1}(y)\right) \cap cut\left(F_{j+1}(x)\right)\right) \\
    & \cup (cut(F_{j+1}(x) \cup F_{j+1}(y)) \setminus cut(F_{j+1}(x)))\\
    =&\ cut(F_{j+1}(x) \cup F_{j+1}(y))
\end{align*}
\underline{Claim $cut(F_{j+1}(x) \cap F_{j+1}(y))\subseteq (cut(F_{j+1}(y))\setminus \hat{Z}) \cup Z$:}\\
\begin{align*}
    (cut(F_{j+1}(y))\setminus \hat{Z}) \cup Z =& \left(cut(F_{j+1}(y))\setminus cut(F_{j+1}(x) \cup F_{j+1}(y))\right) \\
    &\cup \left(cut(F_{j+1}(y)) \cap cut(F_{j+1}(x))\right) \\
    &\cup Z \\
    =& \{v \in F_{j+1}(x)\setminus F_{j+1}(y) \ | \ \exists t \in F_{j+1}(y): \ (t,v)\in E\} \\
    &\cup \{v \in V\setminus (F_{j+1}(y) \cup F_{j+1}(x)) \ | \ \exists t \in F_{j+1}(y): \ (t,v)\in E \\
    &\hspace{5cm}                                        \wedge \exists t \in F_{j+1}(x): \ (t,v)\in E\}\\
    &\cup \{v \in F_{j+1}(y)\setminus F_{j+1}(x) \ | \ \exists t \in F_{j+1}(x): \ (t,v)\in E\} \\
    \supseteq& \{v \in F_{j+1}(x)\setminus F_{j+1}(y) \ | \ \exists t \in F_{j+1}(x)\cap F_{j+1}(y): \ (t,v)\in E\}\\
    &\cup \{v \in V\setminus (F_{j+1}(y) \cup F_{j+1}(x)) \ | \ \exists t \in F_{j+1}(x) \cap F_{j+1}(y): \ (t,v)\in E \}\\
    &\cup \{v \in F_{j+1}(y)\setminus F_{j+1}(x) \ | \ \exists t \in F_{j+1}(x) \cap F_{j+1}(y): \ (t,v)\in E\}\\
    =& cut(F_{j+1}(x) \cap F_{j+1}(y))
\end{align*}
\underline{Claim $\hat{Z} \subseteq cut(F_{j+1}(y))$}
\begin{align*}
    \hat{Z} =& cut(F_{j+1}(x) \cup F_{j+1}(y)) \setminus cut(F_{j+1}(x))\\
    \subseteq& \{v \in V\setminus \left(F_{j+1}(x) \cup F_{j+1}(y)\right) \ | \ \exists t \in \left(F_{j+1}(x) \cup F_{j+1}(y)\right): \ (t,v)\in E\} \\
    &\setminus \{v \in V\setminus \left(F_{j+1}(x) \cup F_{j+1}(y)\right) \ | \ \exists t \in F_{j+1}(x): \ (t,v)\in E\}\\
    \subseteq& \{v \in V\setminus F_{j+1}(y) \ | \ \exists t \in F_{j+1}(y): \ (t,v)\in E\} \\
    =& cut(F_{j+1}(y))
\end{align*}
\qed
\end{proof}
\end{document}